\newcommand\numberthis{\addtocounter{equation}{1}\tag{\theequation}}
\acrodef{DMC}[DMC]{Discrete Memoryless Channel}
\DeclareMathOperator*{\argmax}{arg\,max}
\begin{document}
\sloppy

\title{Multiuser Commitment over Noisy Channels\thanks{R. Chou is with the Department of Computer Science and Engineering  at The University of Texas at Arlington, Arlington, TX. M. Bloch is with the School of Electrical and Computer Engineering at The Georgia Institute of Technology, Atlanta, GA. This work was supported in part by NSF grant CCF-2401373. Part of this
work has been presented at the 58th Annual Allerton Conference on Communication, Control, and Computing \cite{chou2022commitment}. Emails : remi.chou@uta.edu, matthieu.bloch@ece.gatech.edu}}

\author{
  	\IEEEauthorblockN{R\'emi A. Chou and Matthieu R. Bloch
	}

}
\maketitle

\newtheorem{qu}{Question} 
\newtheorem{thm}{Theorem} 
\newtheorem{lem}{Lemma}
\newtheorem{prop}{Proposition}
\newtheorem{cor}{Corollary}
\newtheorem{defn}{Definition}
\newcommand{\remarkend}{\IEEEQEDopen}
\newtheorem{remark}{Remark}
\newtheorem{rem}{Remark}
\newtheorem{pro}{Property}

\newenvironment{example}[1][Example]{\begin{trivlist}
\item[\hskip \labelsep {\bfseries #1}]}{\end{trivlist}}

\theoremstyle{definition}
\newtheorem{ex}{Example}

\renewcommand{\qedsymbol}{ \begin{tiny}$\blacksquare$ \end{tiny} }
\newcommand{\mrb}[1]{\textcolor{black}{#1}}

\renewcommand{\leq}{\leqslant}
\renewcommand{\geq}{\geqslant}

\begin{abstract}
We consider multi-user commitment models that capture the problem of enabling multiple bidders to simultaneously submit auctions to verifiers while ensuring that
    \begin{inparaenum}[i)]
    \item verifiers do not obtain information on the auctions until bidders reveal them at a later stage; and,
    \item bidders cannot change their auction once committed.
    \end{inparaenum}
    Specifically, we assume that bidders and verifiers have access to a noiseless channel as well as a noisy multiple-access channel or broadcast channel, where inputs are controlled by the bidders and outputs are observed by verifiers. In the case of multiple bidders and a single verifier connected by a non-redundant multiple-access channel, we characterize the  commitment capacity region when bidders are not colluding. When the bidders are colluding, we derive an achievable region and a tight converse for the sum rate. In both cases our proposed achievable commitment schemes are constructive. In the case of a single bidder and multiple verifiers connected by a non-redundant broadcast channel, in which verifiers could drop out of the network after auctions are committed, we also characterize the commitment capacity. Our results demonstrate how commitment schemes can benefit from multi-user protocols, and develop resilience when some verifiers may become unavailable. 
  
\end{abstract}

\begin{IEEEkeywords}
Commitment, noisy channels, multiple-access channels, broadcast channel, information-theoretic security.
\end{IEEEkeywords}
\section{Introduction}
Commitment without the need for a trusted third party can be traced back to Blum's coin-flipping problem \cite{blum1983coin}. More generally, a two-party commitment problem involves a bidder, Alice, and a verifier, Bob, and operates in two phases. In the first phase, called the commit phase, Alice sends Bob information  to commit to a message $M$, representing a bid in an auction that must remain concealed from Bob. In the second phase, called the reveal phase, Alice reveals a message $M'$ to Bob, who must determine whether $M'$ is the message that Alice committed to in the commit phase. The protocol must be binding in the sense that, in the reveal phase, Alice cannot make Bob believe that she committed to a message $M'\neq M$. It is well-known that information-theoretic concealment guarantees cannot be achieved over noiseless communication channels and in the absence of additional resources, e.g., \cite{damgaard1999possibility}. However, when a noisy channel is available as a resource, both concealment and binding requirements can be obtained under information-theoretic guarantees, i.e., when Alice and Bob are not assumed to be computationally limited, for some class of noisy channels called non-redundant~\cite{winter2003commitment}.  

Most of the literature on information-theoretic commitments focuses on two-party scenarios that involve a single bidder and a single verifier, e.g., \cite{crepeau1997efficient,damgaard1999possibility,winter2003commitment,damgaard2004unfair,imai2006efficient,oggier2008practical,crepeau2020commitment,nascimento2008commitment,tyagi2015converses,hayashi2022commitment,budkuley2022reverse,yadav2024wiretapped,chou2023retractable}, characterizing commitment capacity under increasingly complex channel models. We study here instead two specific multi-user commitment settings: one in which a verifier interacts with $L$ bidders, each committing to individual messages, and another in which a bidder commits to $B$ verifiers, some of whom may drop out after the commit phase. The motivation for our first setting is to explore whether a multi-bidder protocol can outperform single-bidder protocols and time-sharing when multiple bidders wish to commit to individual messages. Our motivation to consider multiple verifiers in our second setting, is to ensure positive commitment rates even if a verifier drops out of the network after the commit phase, which would be impossible with a single verifier. We note that extensions of the single-user commitment problems can also be found in the computer science literature. For instance, vector commitment~\cite{catalano2013vector} addresses the problem of committing ordered sequences that can later be opened in specific positions. The key contribution therein is to create concise commitments, for which the commitment size is independent of the committed vector length. While the conciseness of the proposed vector commitment bears conceptual similarities with the notion of commitment capacity, the security is computational and does not rely on a noisy channel. Perhaps more closely related to the present work, the notion of commitment with multiple committers and verifiers under information-theoretic security has also been explored in the absence of a noisy channel in \cite{nojoumian2012novel}, when each pair of parties has access to a secure private channel and the majority of parties are honest. Because~\cite{nojoumian2012novel} does not exploit noisy channels, the coding mechanisms enabling commitment are fundamentally different.

In our first setting, $L$ bidders and the verifier have access to a noiseless public communication channel and a noisy discrete memoryless multiple-access channel with $L$ inputs. Each input of the multiple-access channel is controlled by a distinct bidder and the verifier observes the output of the channel. Similar to a two-party setting, the protocol consists of a commit phase and a reveal phase. Here, the concealment requirement is that the verifier must not learn, in an information-theoretic sense, information about any message of any bidder
after the commit phase. The protocol must also be information-theoretically binding in the sense that, during the reveal phase, a bidder cannot make the verifier believe that it committed to another message than the one committed to in the commit phase. For this setting, we consider both the cases of colluding and non-colluding bidders. The non-colluding bidders case corresponds to a scenario in which the bidders do not trust each other and do not want to exchange information with one another. For instance, this would be the case when the bidders commit to messages sent to an auctioneer. Note that the case of colluding bidders may not be suitable for an auction, as bidders are allowed to share their committed sequences. The motivation behind our colluding bidders setting is to enable bulk commitment, for example, in scenarios where each bidder represents a data source, and all data sources belong to the same entity. In such cases, data sources commit their values to the verifier collectively. Example \ref{ex1} illustrates that, for some channels, jointly committing across data sources can be more advantageous than separately committing. Under a non-redundancy condition on the multiple-access channel, we derive the capacity region for the non-colluding bidders case, and an achievable region and the sum-rate capacity for the colluding bidders case. In both cases, our achievability scheme is constructive and relies on distributing hashing with two-universal hash functions~\cite{carter1979universal} for the concealment guarantees. The bindingness of our achievability scheme hinges on the non-redundancy property of the multiple access channel akin to the two-party commitment in~\cite{winter2003commitment}.
The characterization of the sum-rate capacity relies on the polymatroidal properties of our achievability region. Additionally, we demonstrate through a numerical example that, for some channels, using a multi-bidder protocol can outperform using single-bidder protocols, e.g., \cite{winter2003commitment}, and time-sharing.

In our second setting, a single bidder can interact with $B$ verifiers through a noiseless public communication channel and a noisy discrete memoryless broadcast channels with $B$ outputs. The input of the channel is controlled by the bidder and each output is observed by a verifier. Similar to our first setting, the protocol consists of a commit phase and a reveal phase, and must guarantee information-theoretic bindingness and concealment. Our results demonstrate that introducing multiple verifiers can mitigate situations in which verifiers could drop out of the network after the commit phase, 
 as long as one verifier is available during the reveal phase to validate the committed message. For this setting, we derive the commitment capacity under a non-redundancy property of the channel.

We emphasize that the analysis of multi-user commitment goes beyond the single-user case on several accounts:
  \begin{inparaenum}[i)]
  \item the converse proof requires careful considerations to properly handle Markov chains;
  \item the key generation process involves a multi-user version of the hash lemma; and,
  \item we invoke sub-modularity arguments in some of our achievability proofs. 
  \end{inparaenum}

As a byproduct of independent interest, we show  a simple sufficient condition, in terms of injectivity of the transition probability matrix of the channel, to guarantee channel non-redundancy. 
 Additionally, we prove that, for channels whose input alphabet is at most three, our sufficient condition is also necessary, and thus equivalent to the non-redundancy  characterization  in \cite{winter2003commitment}.

The remainder of the paper is organized as follows. After a review of notation in Section~\ref{sec:notation}, we develop  an auxiliary result in Section~\ref{sec:non-redundant}, offering a sufficient condition to guarantee non-redundancy and an alternative characterization of non-redundant channels with input alphabet  of cardinality at most three. We formally introduce our multiuser commitment models in Section~\ref{secmod} along with the associated results. We delegate the proofs to Section~\ref{sec:proof-theo-mac} and Section~\ref{sec:proof-theorem-single-bid-multi-ver}. Finally, we provide concluding remark in~Section~\ref{seccl}.

\section{Notation}
\label{sec:notation}

 For $a,b \in \mathbb{R}$, define $\llbracket a,b \rrbracket \triangleq [\lfloor a \rfloor , \lceil b \rceil ] \cap \mathbb{N}$. Unless specified otherwise, random variables and their realizations are denoted by uppercase and corresponding lowercase letters, respectively, e.g., $x$ is a realization of the random variable $X$.  A \ac{DMC} with input alphabet $\mathcal{X}$, output alphabet $\mathcal{Y}$, and transition probability $W$ is denoted by $(\mathcal{X},\mathcal{Y},W)$.  Additionally, for $x \in \mathcal{X}$, $W_x$ denotes the output distribution of the channel when the input is $x$. The probability simplex for distributions defined over the set $\mathcal{X}$ is denoted by $\mathcal{P}(\mathcal{X})$. For a \ac{DMC} $(\mathcal{X},\mathcal{Y},W)$ and a distribution $p \in \mathcal{P}(\mathcal{X})$, $W\circ p$ denotes the  distribution of the channel output when the input is distributed according to $p$. $\Vert \cdot \Vert$ denotes the $\ell_1$-norm. For two distributions $p,q \in \mathcal{P}(\mathcal{X})$, the variational distance between $p$ and $q$ is denoted by $\mathbb{V}(p,q) \triangleq \frac{1}{2}\Vert p-q \Vert $. The indicator function is denoted by $\mathds{1}\{ \omega \}$, which is equal to~$1$ if the predicate $\omega$ is true and $0$ otherwise.
\section{Non-Redundant Channels}
\label{sec:non-redundant}
The concept of non-redundant channel was introduced in~\cite{winter2003commitment}. A \ac{DMC} $(\mathcal{X},\mathcal{Y},W)$ is \emph{non-redundant}~if 
\begin{align} 
\exists \eta>0,\forall x \in \mathcal{X}, \forall p \in \mathcal{P}(\mathcal{X}) \text{ s.t. }   p(x) =0, \nonumber\\
\Vert W_{x} - W\circ p \Vert \geq \eta.\label{eq:non-redundant}
\end{align}
The following proposition offers a sufficient condition for a channel to be non-redundant. We will then show that this condition is also necessary for a channel to be non-redundant in the case of an input alphabet with cardinality smaller than or equal to three.

\begin{prop}\label{prop1}
    A \ac{DMC}   $(\mathcal{X},\mathcal{Y},W)$ is non-redundant if  $$W\circ p=W\circ q\Rightarrow p=q.$$
\end{prop}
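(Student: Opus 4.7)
The plan is to argue by contradiction, exploiting the compactness of $\mathcal{P}(\mathcal{X})$ and the continuity of the map $p \mapsto W\circ p$ to turn the injectivity hypothesis into the uniform lower bound demanded by non-redundancy.

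First I would negate \eqref{eq:non-redundant}: assume no such $\eta$ exists, so for every $n \in \mathbb{N}$ there is some $x_n \in \mathcal{X}$ and some $p_n \in \mathcal{P}(\mathcal{X})$ with $p_n(x_n)=0$ such that $\|W_{x_n} - W\circ p_n\| < 1/n$. Since $\mathcal{X}$ is finite, the pigeonhole principle lets me extract a subsequence along which $x_n$ is a constant, call it $x^\star \in \mathcal{X}$, so $p_n(x^\star)=0$ for every $n$ in that subsequence.

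Next I would use compactness. The simplex $\mathcal{P}(\mathcal{X})$ is a compact subset of finite-dimensional Euclidean space, so by Bolzano--Weierstrass a further subsequence of $(p_n)$ converges to some $p^\star \in \mathcal{P}(\mathcal{X})$. The set $\{p \in \mathcal{P}(\mathcal{X}) : p(x^\star)=0\}$ is closed, hence $p^\star(x^\star)=0$; in particular $p^\star \neq \delta_{x^\star}$, where $\delta_{x^\star}$ denotes the Dirac mass at $x^\star$. Because $p \mapsto W\circ p$ is linear and therefore continuous, passing to the limit in $\|W_{x^\star} - W\circ p_n\| < 1/n$ yields $W\circ p^\star = W_{x^\star} = W\circ \delta_{x^\star}$.

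Finally I would invoke the hypothesis: $W\circ p^\star = W\circ \delta_{x^\star}$ forces $p^\star = \delta_{x^\star}$, contradicting $p^\star(x^\star)=0$. The only subtle step is the bookkeeping that ensures the limit of the sequence of ``forbidden'' distributions remains forbidden (i.e.\ still vanishes at $x^\star$); this is handled by first stabilizing $x_n$ to a single value $x^\star$ via finiteness of $\mathcal{X}$, and then by using closedness of the coordinate-zero slice of the simplex. No heavy machinery is needed beyond compactness and continuity.
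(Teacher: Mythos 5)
Your proof is correct and rests on the same ingredients as the paper's: compactness of the slice of the simplex where $p(x^\star)=0$ and continuity of $p\mapsto W\circ p$, the only difference being that you run a sequential-compactness contradiction argument directly from the injectivity hypothesis, whereas the paper first establishes the equivalence of the uniform bound~(\ref{eq:non-redundant}) with the pointwise condition~(\ref{eq:non-redundant-2}) via attainment of the infimum, and then observes that injectivity implies~(\ref{eq:non-redundant-2}). This is an organizational rather than a mathematical difference (the paper's factoring has the minor advantage that the intermediate equivalence is reused in Proposition~\ref{prop:non-redundant}), so your argument stands as is.
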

\begin{proof}
   We first show that a channel is non-redundant if and only if\footnote{This fact is implicitly assumed true in~\cite{winter2003commitment}, we prove it here for completeness.}
  \begin{align}
    \forall x \in \mathcal{X}, \forall p \in \mathcal{P}(\mathcal{X}) \text{ s.t. }   p(x) =0,  W_{x} \neq W\circ p.\label{eq:non-redundant-2}
  \end{align}  
  The fact that~(\ref{eq:non-redundant}) implies~(\ref{eq:non-redundant-2}) follows from the definition. Assume now that~(\ref{eq:non-redundant-2}) holds. For $x\in\mathcal{X}$, define the functional
  \begin{align*}
    f_x:\Delta(\mathcal{X}\setminus\{x\})&\to \mathbb{R}_+\\ p& \mapsto \Vert W_{x} -  W\circ p\Vert,
  \end{align*}
  where $\Delta(\mathcal{X}\setminus\{x\})$ is the set of distributions in $\mathcal{P}(\mathcal{X})$ with supports included in $\mathcal{X}\setminus\{x\}$.
    For any $p_1,p_2\in\Delta(\mathcal{X}\setminus\{x\})$, the triangle inequality ensures that
  \begin{align*}
    \abs{f(p_1)-f(p_2)}& = \abs{\Vert W_{x} -  W\circ p_1\Vert-\Vert W_{x} -  W\circ p_2\Vert}\\
    &\leq \Vert{W\circ p_1-W\circ p_2}\Vert\\ &\leq \Vert p_1-p_2\Vert,
  \end{align*}
so that $f_x$ is continuous on $\Delta(\mathcal{X}\setminus\{x\})$. Since $\Delta(\mathcal{X}\setminus\{x\})$ is a subset of $\mathbb{R}^{\abs{\mathcal{X}}-1}$ that is closed and bounded, it is compact and there exists $p_x\in \Delta(\mathcal{X}\setminus\{x\})$ such that
\begin{align*} \inf_{p\in\Delta(\mathcal{X}\setminus\{x\})}f_x(p) = f_x(p_x).
\end{align*}
Setting $\eta = \min_{x\in\mathcal{X}}f_x(p_x)$ shows that~(\ref{eq:non-redundant}) holds, so that~(\ref{eq:non-redundant}) and~(\ref{eq:non-redundant-2}) are indeed equivalent.

Finally, if the channel is such that $W\circ p=W\circ q\Rightarrow p=q$, then~(\ref{eq:non-redundant-2}) follows directly and the channel is non-redundant.
\end{proof}

\begin{prop}
  \label{prop:non-redundant}
  If $\abs{\mathcal{X}}\leq 3$, a \ac{DMC}   $(\mathcal{X},\mathcal{Y},W)$ is non-redundant if and only if $$W\circ p=W\circ q\Rightarrow p=q.$$
\end{prop}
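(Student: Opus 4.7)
The ``if'' direction follows immediately from Proposition~\ref{prop1}, so my plan is to focus on the converse and argue by contraposition. Assuming that there exist distinct $p,q\in\mathcal{P}(\mathcal{X})$ with $W\circ p=W\circ q$, I aim to produce an $x\in\mathcal{X}$ together with a distribution $p'\in\mathcal{P}(\mathcal{X})$ satisfying $p'(x)=0$ and $W_x=W\circ p'$. By the equivalence between~(\ref{eq:non-redundant}) and~(\ref{eq:non-redundant-2}) established inside the proof of Proposition~\ref{prop1}, this is enough to conclude that the channel is redundant.

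My first step reduces the hypothesis to two distributions with disjoint supports. Starting from the nonzero signed measure $r\triangleq p-q$, which sums to zero, I will split it into its positive and negative parts $r=r^+-r^-$, supported respectively on the disjoint sets $A\triangleq\{x:r(x)>0\}$ and $B\triangleq\{x:r(x)<0\}$. Both $A$ and $B$ are nonempty because $r\neq 0$ and $\sum_x r(x)=0$, and their total masses agree at some common value $s>0$. Extending $p\mapsto W\circ p$ linearly to signed measures, the identity $W\circ p=W\circ q$ rewrites as $W\circ r^+=W\circ r^-$, and after normalizing by $s$ I obtain two genuine probability distributions $p'\triangleq r^+/s$ and $q'\triangleq r^-/s$, supported on $A$ and $B$ respectively, still satisfying $W\circ p'=W\circ q'$.

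My second step invokes the alphabet bound. Because $A$ and $B$ are disjoint nonempty subsets of $\mathcal{X}$ with $|A|+|B|\leq|\mathcal{X}|\leq 3$, at least one of them must be a singleton. Without loss of generality, say $A=\{x_0\}$; then $p'$ is the Dirac mass at $x_0$, so that $W_{x_0}=W\circ q'$ while $q'(x_0)=0$ (since $x_0\notin B$). This is exactly the witness required to violate~(\ref{eq:non-redundant-2}), which finishes the contraposition.

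The main obstacle will be conceptual rather than computational: I want to be explicit about \emph{why} the cardinality hypothesis $|\mathcal{X}|\leq 3$ is genuinely used, since the first two steps above apply verbatim at every alphabet size. For $|\mathcal{X}|\geq 4$, both $|A|$ and $|B|$ could be at least two, the normalized parts $p',q'$ need not be point masses, and the final reduction to a Dirac witness breaks down. Consequently, the sufficient condition of Proposition~\ref{prop1} can only be claimed necessary in the small-alphabet regime treated here, and it is important to flag that the equivalence is not expected to persist beyond $|\mathcal{X}|=3$.
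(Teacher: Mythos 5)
Your proposal is correct and follows essentially the same route as the paper: contraposition, decomposition of $p-q$ into positive and negative parts with disjoint supports, normalization to two probability distributions inducing the same output, and the observation that $|\mathcal{X}|\leq 3$ forces one of the two supports to be a singleton, yielding the Dirac witness that violates~(\ref{eq:non-redundant-2}). The only cosmetic difference is that the paper works with the partition $\mathcal{S}=\{x:p(x)>q(x)\}$ and its complement rather than with the two strict supports $A$ and $B$, which changes nothing in substance.
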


\begin{proof}
By Proposition \ref{prop1}, it only remains to show that  $W\circ p=W\circ q\Rightarrow p=q$ is a necessary condition for non-redundancy. By contraposition, assume now that the channel is such that there exist two distinct distributions $p,q\in\mathcal{P}(\mathcal{X})$ such that $W\circ p=W\circ q$. Define $\mathcal{S}\triangleq \{x\in\mathcal{X}:p(x)>q(x)\}$, which is not empty since $p$ and $q$ are distinct so that $|\mathcal{S}|\geq 1$. Define the distributions
  \begin{align*}
    \tilde{p}(x) & \triangleq \frac{p(x)-q(x)}{\sum_{u\in\mathcal{S}} (p(u)-q(u))}, \forall x\in\mathcal{S},\\
    \tilde{q}(x) & \triangleq \frac{q(x)-p(x)}{\sum_{u\in\mathcal{S}^c} (q(u)-p(u))}, \forall x\in\mathcal{S}^c,
  \end{align*}
  which have disjoint support, and note that $$\sum_{u\in\mathcal{S}} (p(u)-q(u))=\sum_{u\in\mathcal{S}^c} (q(u)-p(u)).$$ Then, we have
  \begin{align*} 
    &W\circ p=W\circ q\\ &\Leftrightarrow \forall y\in\mathcal{Y}, \textstyle\sum_{x\in\mathcal{S}}W(y|x)(p(x)-q(x)) \\
    & \phantom{-------}= \textstyle\sum_{x\in\mathcal{S}^c}W(y|x)(q(x)-p(x))\\
                      &\Leftrightarrow \forall y\in\mathcal{Y}, \textstyle\sum_{x\in\mathcal{S}}W(y|x)\tilde{p}(x) = \textstyle\sum_{x\in\mathcal{S}^c} W(y|x)\tilde{q}(x).
  \end{align*}
  Since $\abs{\mathcal{X}}\leq 3$, either $|\mathcal{S}|=1$ or $|\mathcal{S}^c|=1$ and we can assume without loss of generality that $\mathcal{S}=\{x^*\}$ and $\tilde{p}(x^*)=1$. Hence, $W_{x^*}=\sum_{x\in\mathcal{X}\setminus\{x^*\}}W(y|x)\tilde{q}(x)$ and the channel is redundant as per~(\ref{eq:non-redundant-2}).
\end{proof}

Proposition~\ref{prop:non-redundant} provides an alternative characterization of the non-redundant condition but the result cannot be extended to the case $|\mathcal{X}|>3$. Indeed, the channel with $\abs{\mathcal{X}}=4$, $\abs{\mathcal{Y}}=3$, and 
\begin{align}
W=  \left[
  \begin{array}{ccc}
    1&	0& 	0\\
    1/2&	 1/2&	0\\
     1/2&	0&  1/2\\
    0&	 1/2	& 1/2\\
  \end{array}
  \right]
\end{align}
corresponds to a non-redundant channel since the rows define the extreme points of a polytope but the  input distributions $(0.5,0,0,0.5)$ and $(0,0.5,0.5,0)$ induce the same output distribution.

\section{Multi-User Commitment Models and Main Results} \label{secmod}

In Section \ref{sec:multi-bid-single-ver}, we present a multi-bidder single-verifier model and our results for both colluding and non-colluding bidders settings. In Section \ref{sec:single-bidder-multi}, we present a single-bidder multi-verifier model and our results for this model.
\subsection{Multi-Bidder Single-Verifier Model and Results}
\label{sec:multi-bid-single-ver}

\begin{figure}
  \centering
  \includegraphics[width=7cm]{./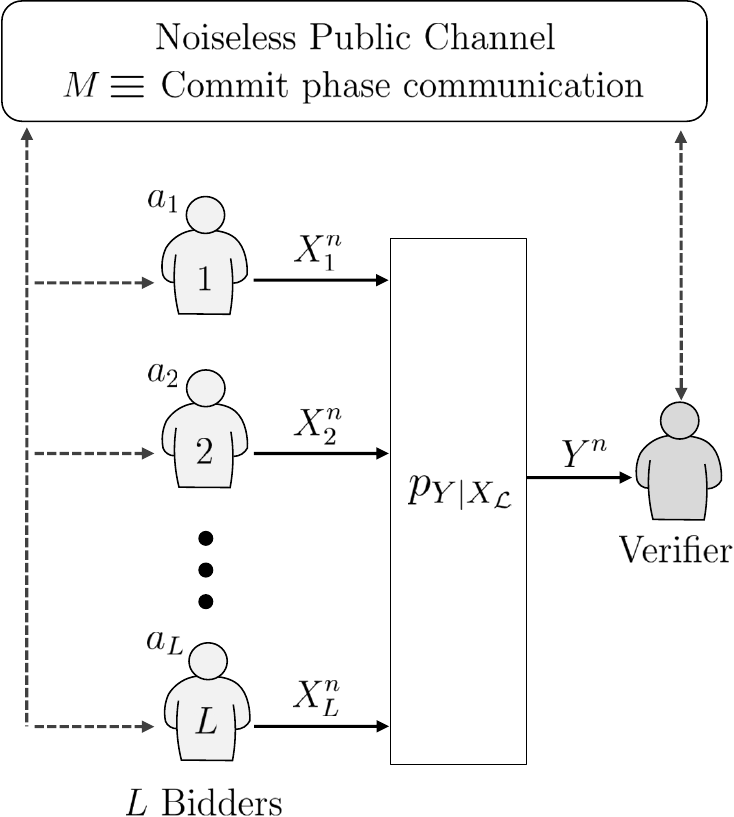}
  \caption{Multi-bidder single-verifier commitment over a multiple-access channel.}
  \label{fig:multi-bid-single-ver}
\end{figure}
We first consider the model illustrated in Figure~\ref{fig:multi-bid-single-ver}, in which $L \in \mathbb{N}^*$ bidders want to commit messages to a single verifier using a noiseless channel and a noisy multiple access channel. We set $\mathcal{L} \triangleq \llbracket 1 , L \rrbracket$. The multiple-access channel is characterized by $L$ finite input alphabets $(\mathcal{X}_{\ell})_{\ell \in \mathcal{L}}$, a finite output alphabet $\mathcal{Y}$, and a transition probability $p_{Y|X_1,\cdots,X_L}$. To simplify notation, we denote the Cartesian product of the input alphabets by $\mathcal{X}_{\mathcal{L}} \triangleq \bigtimes_{{\ell} \in \mathcal{L}} \mathcal{X}_{\ell}$, a generic input by $X_{\mathcal{L}} \triangleq (X_{\ell})_{\ell \in \mathcal{L}}$ where $X_{\ell}$, ${\ell} \in\mathcal{L}$, is defined over $\mathcal{X}_{\ell}$, and $p_{Y|X_{\mathcal{L}}}\triangleq p_{Y|X_1,\cdots,X_L}$. For any $x_{\mathcal{L}} \in \mathcal{X}_{\mathcal{L}}$, define  $W_{x_{\mathcal{L}}} : y \mapsto p_{Y|X_{\mathcal{L}}}(y|x_{\mathcal{L}})$. 

We assume throughout that the multiple access channel $W$ is non-redundant by adapting the definition in~(\ref{eq:non-redundant-2}), i.e.,
\begin{align} 
\forall x_{\mathcal{L}} \in \mathcal{X}_{\mathcal{L}}, \forall p_{X_{\mathcal{L}}} \in \mathcal{P}(\mathcal{X}_{\mathcal{L}}) \text{ s.t. }   p_{X_{\mathcal{L}}}(x_{\mathcal{L}}) =0, \nonumber \\ W_{x_{\mathcal{L}}}  \neq \sum_{x'_{\mathcal{L}}\in \mathcal{X}_{\mathcal{L}}} p_{X_{\mathcal{L}}}(x'_{\mathcal{L}})W_{x'_{\mathcal{L}}}.\label{eqred}
\end{align}

The bidders interactively use the noiseless and noisy channels to commit their messages (commit phase) and reveal their messages (reveal phase). We distinguish two modes of operation, depending on whether the bidders collude or not.

\subsubsection{Colluding bidders}
\label{sec:coll-bidders}

For $({R_{\ell}})_{{\ell} \in\mathcal{L}}$, a $((2^{n{R_{\ell}}})_{{\ell} \in\mathcal{L}},n)$ commitment scheme with colluding bidders consists~of:
\begin{itemize}
\item For every ${\ell} \in\mathcal{L}$, a sequence $a_{\ell} \in \mathcal{A}_{\ell}  \triangleq \llbracket 1 , 2^{nR_{\ell}} \rrbracket$ that Bidder $\ell$ wishes to commit to;
\item A noiseless communication channel between the bidders and the verifier;
\item Local randomness $S \in \mathcal{R}$ available at the bidders;
\item Local randomness $S'\in \mathcal{R}'$ available at the verifier.
\end{itemize}
The scheme then operates as follows:
\begin{enumerate}[(i)]
    \item Commit phase: Define $a_{\mathcal{L}} \triangleq (a_{\ell})_{{\ell} \in \mathcal{L}}$. For channel use $i\in \llbracket 1 , n \rrbracket$, the bidders send $X_{\mathcal{L},i}(a_{\mathcal{L}},S,M'_{1:i-1,1:r_{i-1}})$ over the channel, where  $M'_{1:i-1,1:r_{i-1}} \triangleq  (M'_{\alpha,\beta} )_{\alpha \in \llbracket 1, i-1 \rrbracket, \beta \in \llbracket 1, r_{i-1}\rrbracket} $ (with the convention $M'_{1:i-1,1:r_{i-1}} = \emptyset $ when $i=0$) represents messages previously received from the verifier as described next. Then, the bidders engage in $r_i$ rounds of noiseless  communication with the verifier, i.e., for $j \in \llbracket 1 , r_i \rrbracket$, the bidders send  $M_{i,j} (a_{\mathcal{L}},S,M'_{1:i,1:j-1})$ and the verifier replies $M'_{i,j} (S',M_{1:i,1:j},Y^i)$, where $Y^i \triangleq (Y_1, Y_2, \ldots, Y_i)$ and $M_{1:i,1:j} \triangleq  (M_{\alpha,\beta} )_{\alpha \in \llbracket 1, i \rrbracket, \beta \in \llbracket 1, j\rrbracket}$. We denote the collective noiseless communication between the bidders and the verifier by $M$, i.e., $M \triangleq (M'_{1:n,1:r_{n}},M_{1:n,1:
    r_{n}})$. Define $V\triangleq (Y^n, M, S')$ as all the information available at the verifier at the end of the commit phase.  
\item Reveal phase: The bidders reveal $(a_{\mathcal{L}},S)$ to the verifier. The verifier performs a test $\beta(V, a_{\mathcal{L}}, S ) $ that returns $1$ if the sequence $a_{\mathcal{L}}$ is accepted and $0$ otherwise.
\end{enumerate}
The collusion of the bidders is reflected in the randomness $S$ common to \emph{all} bidders and the fact that bidders know each other's messages in the commit phase.

\begin{defn} \label{defreqs}
A rate-tuple $(R_{\ell})_{\ell \in \mathcal{L}}$ is achievable if there exists a sequence of $((2^{n{R_{\ell}}})_{{\ell} \in\mathcal{L}},n)$ commitment schemes such that for any $\tilde S \in \mathcal{R}$, $ a_{\mathcal{L}},\tilde{a}_{\mathcal{L}} \in \mathcal{A}_{\mathcal{L}}$ such that $a_{\mathcal{L}} \neq \tilde{a}_{\mathcal{L}}$,
\begin{align}
 \lim_{n\to \infty} \mathbb{P}[ \beta(V, a_{\mathcal{L}}, S) =  0] &= 0,  \text{ (correctness) }\label{eqcorrectness}\\
 \lim_{n\to \infty} I(A_{\mathcal{L}}; V) &= 0, \text{ (concealment)}\label{eqconcealment}\\
  \lim_{n\to \infty}\mathbb{P}[\beta(V, a_{\mathcal{L}}, S)\! =\!  1 \! =\! \beta(V, \tilde{a}_{\mathcal{L}}, \tilde S)]    &= 0. \text{ (bindingness)} \label{eqbindingness}
\end{align}
The set of all achievable rate-tuples is the capacity~region, and the supremum, over all achievable rate-tuples $(R_{\ell})_{{\ell} \in\mathcal{L}}$, of the sum-rates $\sum_{{\ell} \in\mathcal{L}}R_{\ell}$ is the sum-rate capacity.
\end{defn}
Equation \eqref{eqcorrectness} means that the verifier accepts the revealed sequences $a_{\mathcal{L}}$ as long as the bidders are honest. Equation~\eqref{eqconcealment} ensures that the verifier gains no information   about the committed sequences $a_{\mathcal{L}}$ prior to the reveal phase. Equation~\eqref{eqbindingness} ensures that the bidders cannot convince the verifier to accept sequences other than those initially committed to.

Our main result is a partial characterization of the capacity region for  colluding bidders. In the following, for any $\mathcal{T} \subseteq \mathcal{L}$, we write the sum-rate of the bidders in $\mathcal{T}$ as $R_{\mathcal{T}} \triangleq \sum_{{\ell} \in \mathcal{T}} R_{\ell}$.

\begin{thm} \label{th1}
For the case of colluding bidders, the following region is achievable:
$$\bigcup_{p_{X_{\mathcal{L}}}\in {\mathcal{P}}(\mathcal{X}_{\mathcal{L}})} \{ (R_{\ell})_{{\ell} \in\mathcal{L}} : R_{\mathcal{T}}  \leq  H(X_{\mathcal{T}}|Y) , \forall \mathcal{T} \subseteq \mathcal{L}\}.   $$
Moreover, the sum-rate capacity is $\max_{p_{X_{\mathcal{L}} \in {\mathcal{P}}(\mathcal{X}_{\mathcal{L}})}  } H(X_{\mathcal{L}}|Y).$
\end{thm}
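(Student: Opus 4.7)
My plan is to establish achievability of the region via a hashing-based commitment scheme adapted from the two-party construction of \cite{winter2003commitment}, and to derive the sum-rate converse by a Fano/bindingness argument followed by a standard single-letterization.

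For achievability, I would let the shared randomness $S$ encode both an i.i.d.\ codeword $X_{\mathcal{L}}^n\sim p_{X_{\mathcal{L}}}^{\otimes n}$ and a tuple $(F_{\ell})_{\ell\in\mathcal{L}}$ of independent two-universal hash functions with $F_{\ell}:\mathcal{X}_{\ell}^n\to\mathcal{A}_{\ell}$. On commit input $a_{\mathcal{L}}$, the bidders resample $X_{\mathcal{L}}^n$ from $p_{X_{\mathcal{L}}}^{\otimes n}$ conditioned on $F_{\ell}(X_{\ell}^n)=a_{\ell}$ for every $\ell$ (feasible because they collude and share $S$), transmit $X_{\mathcal{L}}^n$ on the MAC during the commit phase, and send $(F_{\ell})_{\ell\in\mathcal{L}}$ over the noiseless channel. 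The reveal is $(a_{\mathcal{L}},S)$; the verifier reconstructs $X_{\mathcal{L}}^n$ from $S$ and accepts iff $F_{\ell}(X_{\ell}^n)=a_{\ell}$ for every $\ell$ and $(X_{\mathcal{L}}^n,Y^n)$ is jointly typical with respect to $p_{X_{\mathcal{L}}}p_{Y|X_{\mathcal{L}}}$. Correctness follows from the AEP. Concealment then reduces to a multi-user leftover hashing argument: since $R_{\mathcal{T}}\leq H(X_{\mathcal{T}}|Y)$ holds for every $\mathcal{T}\subseteq\mathcal{L}$ and the product hash $(F_{\ell})_{\ell}$ is almost two-universal on the relevant sub-sources, $A_{\mathcal{L}}=(F_{\ell}(X_{\ell}^n))_{\ell}$ is within vanishing variational distance of uniform jointly with $(Y^n,F_{\mathcal{L}},S')$, so that $I(A_{\mathcal{L}};V)\to 0$.

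Bindingness hinges on non-redundancy in the form~(\ref{eqred}). A successful cheat would require disclosing a sequence $\tilde X_{\mathcal{L}}^n\neq X_{\mathcal{L}}^n$ jointly typical with the realized $Y^n$ and with a hash different from $a_{\mathcal{L}}$. Letting $\hat\pi$ denote the type of the transmitted $X_{\mathcal{L}}^n$, the type of $Y^n$ concentrates on $\sum_{x_{\mathcal{L}}}\hat\pi(x_{\mathcal{L}})W_{x_{\mathcal{L}}}$ by the law of large numbers, and~(\ref{eqred}) forbids any distinct input type with strictly smaller support from producing this same output distribution; hence any $\tilde X_{\mathcal{L}}^n$ jointly typical with $Y^n$ must have type close to $\hat\pi$. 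A standard counting argument then bounds the expected number of such cheating sequences with an alternative hash by a vanishing quantity once the rate point is retracted slightly into the interior of the achievable region. The interactive feedback $M'$ only refines the transcript and does not affect this empirical-distribution argument.

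For the sum-rate converse, begin with $n\sum_{\ell}R_{\ell}\geq H(A_{\mathcal{L}})$, and use concealment to write $H(A_{\mathcal{L}})\leq H(A_{\mathcal{L}}|V)+\epsilon_n$. Correctness and bindingness together imply that with probability tending to one the set of openings $(\hat a_{\mathcal{L}},\hat S)$ accepted by $\beta(V,\cdot,\cdot)$ reduces to the honest pair $(A_{\mathcal{L}},S)$, so a Fano-type bound yields $H(A_{\mathcal{L}},S|V)\leq o(n)$. Since $X_{\mathcal{L}}^n$ is a deterministic function of $(A_{\mathcal{L}},S,M')$ and $M'\subseteq V$, it follows that $H(A_{\mathcal{L}}|V)\leq H(X_{\mathcal{L}}^n|V)+o(n)$. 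Using $Y^n\subseteq V$ and a time-sharing variable $Q$ uniform on $\llbracket 1,n\rrbracket$, we obtain $H(X_{\mathcal{L}}^n|V)\leq H(X_{\mathcal{L}}^n|Y^n)\leq\sum_{i=1}^n H(X_{\mathcal{L},i}|Y_i)=nH(X_{\mathcal{L},Q}|Y_Q,Q)\leq n\max_{p_{X_{\mathcal{L}}}}H(X_{\mathcal{L}}|Y)$, which completes the converse. The hardest step will be the Fano/bindingness reduction, since bindingness in isolation only forbids a \emph{second} valid opening rather than pinning $(A_{\mathcal{L}},S)$ to a single value; I would handle this by conditioning on the high-probability event that the honest opening is accepted and then invoking bindingness on the complement to reduce the accepting set to a singleton before Fano is applied.
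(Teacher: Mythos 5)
Your achievability scheme is not binding. Two problems compound here. First, non-redundancy does not merely constrain the \emph{type} of an opened sequence: Lemma~\ref{lem1} shows it forces any $\tilde{x}^n_{\mathcal{L}}$ passing the typicality test to lie within Hamming distance $O(n^\alpha)$ of the transmitted $x^n_{\mathcal{L}}$. Constraining only the type is far too weak, and your counting argument then runs backwards: given $Y^n$, the number of sequences that are jointly typical and hash to some $\tilde{a}\neq a$ is on the order of $2^{n(H(X_{\mathcal{L}}|Y)-R_{\mathcal{L}})}$, which is exponentially \emph{large} at any rate below $H(X_{\mathcal{L}}|Y)$ --- indeed it must be, since concealment requires every message to have many preimages consistent with the verifier's view. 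Second, in your scheme $F_{\mathcal{L}}$ is chosen by the (colluding, possibly dishonest) bidders, so they can simply prepare two sequences $x^n_{\mathcal{L}}$ and $\tilde{x}^n_{\mathcal{L}}$ differing in one coordinate with $F_{\ell}(x^n_{\ell})=a_{\ell}$ and $F_{\ell}(\tilde{x}^n_{\ell})=\tilde{a}_{\ell}$; both pass the typicality test with probability tending to one and they open to different messages. The missing ingredient is the \emph{verifier-chosen} two-universal challenge $G_{\ell}$ that the bidders must answer during the commit phase: this pins them to $X^n_{\ell}$, because any of the $2^{O(n^\alpha\log n)}$ sequences within Hamming distance $O(n^\alpha)$ collides under $G_{\ell}$ with probability only $2^{-n\eta}$, and larger deviations are caught by typicality via Lemma~\ref{lem1}.

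Your converse also rests on a false intermediate claim: $H(A_{\mathcal{L}},S\mid V)\leq o(n)$ cannot hold, since together with concealment ($I(A_{\mathcal{L}};V)\to 0$) it would force $H(A_{\mathcal{L}})=o(n)$, i.e., zero rate. The accepting set given $V$ alone is \emph{not} a singleton --- for typical $v$, almost every $\hat{a}$ admits some accepting $\hat{s}$, which is exactly what concealment demands --- and the union bound over the exponentially many pairs $(\tilde{a},\tilde{S})$ needed to ``reduce the accepting set to a singleton'' is not available from the bindingness definition, which fixes $(\tilde{a},\tilde{S})$. What bindingness and correctness actually buy (the paper's Lemma~\ref{lem}) is an estimator of $A_{\ell}$ from $V$ \emph{together with the bidder's channel input} $X^n_{\ell}$, i.e., $H(A_{\mathcal{T}}\mid V, X^n_{\mathcal{T}})=o(n)$; one then writes $H(A_{\mathcal{T}}\mid V)\leq H(X^n_{\mathcal{T}}\mid V)+H(A_{\mathcal{T}}\mid V,X^n_{\mathcal{T}})\leq H(X^n_{\mathcal{T}}\mid Y^n)+o(n)$ and single-letterizes as you do. Constructing that estimator requires the maximization over acceptance probabilities in \eqref{eq0}--\eqref{eq2} and the Markov chain of Lemma~\ref{lemmarkov}; your proposed fix of ``conditioning on the honest opening being accepted'' does not substitute for it. Your final single-letterization step is fine.
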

\begin{proof}
  See Section~\ref{secach}.
\end{proof}

The following example demonstrates that, for some channels, having a multi-bidder coding scheme, rather than doing time-sharing with single-bidder coding schemes~\cite{winter2003commitment},  can increase the sum-rate of the bidders. 
\begin{ex}\label{ex1}
Suppose $L=2$ and $\mathcal{X}_1=\mathcal{X}_2=\mathcal{Y}=\{0,1\}$. Consider the following transition probability matrices
    \begin{align}
W&=  \left[
  \begin{array}{ccc}
    W(0|0,0)&	W(1|0,0)\\
    W(0|0,1)&	W(1|0,1)\\
    W(0|1,0)&	W(1|1,0)\\
    W(0|1,1)&	W(1|1,1)	\\
  \end{array}
  \right], \label{channelW}\\
W_1&=  \left[
  \begin{array}{ccc}
    W(0|0,0)&	W(1|0,0)\\
    W(0|0,1)&	W(1|0,1)
  \end{array}
  \right], \label{channelW1}\\
  W_2 &=  \left[
  \begin{array}{ccc}
    W(0|0,0)&	W(1|0,0)\\
    W(0|1,0)&	W(1|1,0)  \end{array}
  \right],\label{channelW2}
\end{align}
and suppose that the corresponding channels are non-redundant.

The transition probability matrix $W_i$, $i\in\{1,2\}$, corresponds to the situation in which only Bidder $(3-i)$ uses the channel $W$ and $X_i$ is constant and equal to $0$, i.e., when the distribution of $X_i$ is $p_{X_i}^{\star}$ defined by $(p_{X_i}^{\star}(0),p_{X_i}^{\star}(1))\triangleq (1,0)$. For the problem of commitment as described in Definition~\ref{defreqs}, let $\Sigma(W)$ be the sum-rate capacity for the channel $W$ and $C(W_i)$, $i\in\{1,2\}$, be the commitment capacity for the channel $W_i$. For convenience, we write $H_{p}(X_1X_2|Y)$ to indicate that $(X_1,X_2)$ is distributed according to $p$. Then, from Theorem~\ref{th1}, we have
\begin{align*}
\Sigma(W)
&=\max_{p_{X_1X_2 }  } H(X_1X_2|Y) \\
&\geq H_{p^{\star}}(X_1X_2|Y) \\
&= H_{p^{\star}}(X_1|Y)+H_{p^{\star}}(X_2|YX_1) \\
&= C(W_1)+H_{p^{\star}}(X_2|YX_1)\\
& \geq C(W_1), \numberthis  \label{eqlbSW}
\end{align*}
where the first inequality holds with $p^{\star}  \in \argmax_{p_{X_1}p_{X_2}^{\star}} H(X_1|Y) $. Similar to \eqref{eqlbSW}, we have $\Sigma(W)\geq C(W_2)$ such that 
\begin{align}
\Sigma(W)\geq \max(C(W_1),C(W_2)).\label{eqlbS}
\end{align}
Equation \eqref{eqlbS} indicates that using time-sharing with a single bidder over $W_1$ and $W_2$ cannot outperform  a  multi-bidder coding scheme over $W$.

We now provide a numerical example to show that the inequality in \eqref{eqlbS} can be strict, demonstrating that
 using $W$ with a multi-bidder coding scheme achieves a larger sum rate than using time-sharing with a single bidder over $W_1$ and $W_2$.  Specifically, consider the following channels
 \begin{align}
W&=    \left[
  \begin{array}{ccc}
    1/4&	3/4\\
    1/2&	1/2\\
    1/2&	1/2\\
    1/2&	1/2	\\
  \end{array}
  \right],\label{exW}\\
W_1&=    \left[
  \begin{array}{ccc}
    1/4&	3/4\\
    1/2&	1/2
  \end{array}
  \right], \label{exW1}\\
  W_2 &=  \left[
  \begin{array}{ccc}
    1/4&	3/4\\
    1/2&	1/2\\
  \end{array}
  \right]. \label{exW2}
\end{align}
Note that the kernel of the matrices $W$, $W_1$, and $W_2$ contain only the zero vector, hence, by Proposition \ref{prop1}, the three channels are non-redundant. Then, from Theorem \ref{th1}, we numerically find~that 
\begin{align*}
\Sigma(W) & > 1.9647 \textnormal{ bits/channel use} 
\\& > 0.9512 \textnormal{ bits/channel use}\\
& > \max( C(W_1),C(W_2)). \end{align*}
\end{ex}

\subsubsection{Non-colluding bidders}
\label{sec:non-colluding-bidders}
For $({R_{\ell}})_{{\ell} \in\mathcal{L}}$, a $((2^{n{R_{\ell}}})_{{\ell} \in\mathcal{L}},n)$ commitment scheme for non-colluding bidders consists~of:
\begin{itemize}
\item For every ${\ell} \in\mathcal{L}$, a sequence $a_{\ell} \in \mathcal{A}_{\ell}  \triangleq \llbracket 1 , 2^{nR_{\ell}} \rrbracket$ that Bidder $\ell$ wants to commit to;
\item Local randomness $S_{\ell} \in \mathcal{R}_{\ell}$ at Bidder $\ell \in \mathcal{L}$;
\item Local randomness $S'_{\ell} \in \mathcal{R}_{\ell}'$, $\ell \in \mathcal{L}$, at the verifier where $S'_{\ell}$ is only used in the interactive noiseless communication with Bidder $\ell \in \mathcal{L}$ during the commit phase.
\end{itemize}
The scheme then operates as follows:
\begin{enumerate}[(i)]
\item Commit phase: We adopt the same notation as in Section \ref{sec:coll-bidders}. For each $\ell \in \mathcal{L}$, for $i\in \llbracket 1 , n \rrbracket$, Bidder~$\ell$ sends $X_{l,i}(a_{\ell},S_{\ell},M'_{l,1:i-1,1:r_{i-1}})$ over the channel and engage in $r_i$ rounds of noiseless  communication with the verifier, i.e., for $j \in \llbracket 1 , r_i \rrbracket$, Bidder~$\ell$ sends  $M_{l,i,j} (a_{\ell},S_{\ell},M'_{l,1:i,1:j-1})$ and the verifier replies $M'_{l,i,j} (S'_{\ell},M_{l,1:i,1:j},Y^i)$. We denote the collective noiseless communication between Bidder $\ell \in \mathcal{L}$ and the verifier by $M_{\ell}$ and define $M\triangleq (M_{\ell})_{{\ell} \in\mathcal{L}}$. Define $V_{\ell} \triangleq (Y^n, M_{l}, S'_{\ell})$ for $\ell \in \mathcal{L}$ and $V_{\mathcal{L}} \triangleq (V_{l})_{{\ell} \in\mathcal{L}}$.
\item Reveal phase: Bidder $\ell \in \mathcal{L}$ reveals $(a_{\ell},S_{\ell})$ to the verifier. For each ${\ell} \in\mathcal{L}$, the verifier performs a test $\beta_{\ell}(V_{\ell}, a_{\ell}, S_{\ell} ) $, ${\ell} \in \mathcal{L}$, that returns $1$ if the sequence $a_{\ell}$ is accepted and $0$ otherwise.
\end{enumerate}
Unlike the case of colluding bidders, the bidders here only have access to their own message and  local randomness. Additionally, in the reveal phase, the acceptance test performed by the verifier is now done individually for each bidder rather than for all the bidders jointly.

\begin{defn}\label{def2}
A rate-tuple $(R_{\ell})_{\ell \in \mathcal{L}}$ is achievable if there exists a sequence of $((2^{n{R_{\ell}}})_{{\ell} \in\mathcal{L}},n)$ commitment schemes such that for any ${\ell} \in\mathcal{L}$, $\tilde S_{\ell}\in \mathcal{R}_{\ell}$, $ a_{\mathcal{L}},\tilde{a}_{\mathcal{L}} \in \mathcal{A}_{\mathcal{L}}$ such that $a_{\mathcal{L}} \neq \tilde{a}_{\mathcal{L}}$,
\begin{align*}
  \lim_{n\to \infty} \mathbb{P}[ \beta_{\ell}(V_{\ell}, a_{\ell}, S_{\ell}) =  0] &= 0,  \text{ (correctness) } \\
 \lim_{n\to \infty} I(A_{\mathcal{L}}; V_{\mathcal{L}}) &= 0, \text{ (concealment)}\\
  \lim_{n\to \infty}\mathbb{P}[\beta_{\ell}(V_{\ell}, a_{\ell}, S_{\ell})\! =\!  1 \! =\! \beta_{\ell}(V_{\ell}, \tilde{a}_{\ell}, \tilde S_{\ell})]    &= 0. \text{ (bindingness)}
\end{align*}
The set of all achievable rate-tuples is the capacity~region, and the supremum, over all achievable rate-tuples $(R_{\ell})_{{\ell} \in\mathcal{L}}$, of the sum-rates $\sum_{{\ell} \in\mathcal{L}}R_{\ell}$ is the sum-rate capacity.
\end{defn}
The correctness, concealment, and bindingness requirements in Definition \ref{def2} can be interpreted similarly to those in the colluding bidders setting discussed in Section \ref{sec:coll-bidders}. Note that since $V_{\ell}$, ${\ell} \in\mathcal{L}$, depends on $Y^n$, which depends on all bidder inputs to the channel, $I(A_{\mathcal{L}}; V_{\mathcal{L}})$ is not necessarily equal to $ \sum_{{\ell} \in\mathcal{L}} I(A_{\ell}; V_{\ell})$, and therefore the concealment requirement from Definition~\ref{defreqs} cannot be simplified. Our main result is a complete characterization of the capacity region for non-colluding bidders. Again, for any $\mathcal{T} \subseteq \mathcal{L}$, we set $R_{\mathcal{T}} \triangleq \sum_{{\ell} \in \mathcal{T}} R_{\ell}$.
 
\begin{thm} \label{th2}
Define the set of product input distributions
\begin{align*}
\mathcal{P}^{\textup{I}}(\mathcal{X}_{\mathcal{L}}) \triangleq \{ p_{X_{\mathcal{L}}} \in \mathcal{P} (\mathcal{X}_{\mathcal{L}}) : p_{X_{\mathcal{L}}} = \textstyle\prod_{{\ell} \in \mathcal{L}}p_{X_{\ell}}\}.
\end{align*}
For the case of non-colluding bidders, the capacity region is  
$$\bigcup_{p_{X_{\mathcal{L}}}\in {\mathcal{P}^{\textup{I}}}(\mathcal{X}_{\mathcal{L}})} \{ (R_{\ell})_{{\ell} \in\mathcal{L}} : R_{\mathcal{T}}  \leq  H(X_{\mathcal{T}}|Y) , \forall \mathcal{T} \subseteq \mathcal{L}\}.   $$
Moreover, the sum-rate capacity is $\max_{p_{X_{\mathcal{L}} \in {\mathcal{P}^{\textup{I}}}(\mathcal{X}_{\mathcal{L}})}  } H(X_{\mathcal{L}}|Y).$
\end{thm}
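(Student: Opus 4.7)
The plan is to reuse the proof machinery developed for Theorem~\ref{th1} for the colluding case, with two targeted modifications: restricting the input distribution to products on the achievability side, and extracting the product-distribution constraint on the converse side. Since the rate region has exactly the same polymatroidal form as in Theorem~\ref{th1}, with the sole additional constraint $p_{X_{\mathcal{L}}}\in\mathcal{P}^{\textup{I}}(\mathcal{X}_{\mathcal{L}})$, the heavy lifting is already done by the colluding-case analysis and only the factorization structure needs to be handled carefully.

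For achievability, I would adapt the distributed-hashing construction used for Theorem~\ref{th1}: fix $p_{X_{\mathcal{L}}}=\prod_{\ell}p_{X_\ell}$, and let each Bidder~$\ell$ independently draw $X_\ell^n$ i.i.d.\ from $p_{X_\ell}$, sample a two-universal hash $F_\ell:\mathcal{X}_\ell^n\to\llbracket 1,2^{nR_\ell}\rrbracket$ from its own local randomness $S_\ell$, and transmit $M_\ell=F_\ell(X_\ell^n)\oplus A_\ell$ over the noiseless channel. During the reveal phase, Bidder~$\ell$ sends $X_\ell^n$ and the verifier checks joint typicality with $Y^n$ and that $F_\ell(X_\ell^n)\oplus M_\ell = A_\ell$. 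The concealment condition $I(A_{\mathcal{L}};V_{\mathcal{L}})\to 0$ reduces to requiring the tuple $(F_\ell(X_\ell^n))_{\ell\in\mathcal{L}}$ to be jointly near-uniform given $Y^n$; by a multi-source version of the leftover hash lemma applied to independent sources with common side information $Y^n$, this holds whenever $R_{\mathcal{T}}<H(X_{\mathcal{T}}|Y)$ for every $\mathcal{T}\subseteq\mathcal{L}$. Bindingness transfers from the two-party argument of \cite{winter2003commitment} as invoked in Theorem~\ref{th1}: non-redundancy of the MAC in~(\ref{eqred}) guarantees that any deviation from the committed input sequence produces a detectably different single-letter output distribution, which the typicality test rejects. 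Crucially, since the bidders share no randomness and each bidder's input depends only on $(A_\ell,S_\ell)$ and its own feedback $M'_{\ell,\cdot}$, this scheme is compatible with the non-colluding operational constraints.

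For the converse, I would first reproduce the chain of inequalities from the converse of Theorem~\ref{th1} to obtain $R_{\mathcal{T}}\leq H(X_{\mathcal{T}}|Y)$ for each $\mathcal{T}\subseteq\mathcal{L}$, since the concealment, correctness, and bindingness requirements in Definition~\ref{def2} are at least as strong as in the colluding case (the per-bidder reveal tests only tighten the adversarial constraints). The new ingredient is to show that the single-letter input distribution induced by any achievable scheme lies in $\mathcal{P}^{\textup{I}}(\mathcal{X}_{\mathcal{L}})$. Here, the key observation is that at channel use $i$, Bidder~$\ell$'s input $X_{\ell,i}$ is a function of $(A_\ell,S_\ell,M'_{\ell,1:i-1,1:r_{i-1}})$, and the external randomness $(A_\ell,S_\ell,S'_\ell)_{\ell\in\mathcal{L}}$ is mutually independent across $\ell$. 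I would then introduce a time-sharing variable $T$ and the conditioning $(Y^{T-1},T)$ to verify that the inputs $(X_{\ell,T})_{\ell\in\mathcal{L}}$ remain mutually conditionally independent, whence the single-letterization delivers a product input distribution. The principal obstacle is precisely this last step: the feedback $M'_{\ell,\cdot}$ depends on $Y^i$, which couples all bidders' past inputs, so a careful Markov-chain dissection conditioning on a common history is required to prevent spurious correlations from sneaking into the single-letter distribution—this is the ``careful considerations to properly handle Markov chains'' flagged in the introduction.
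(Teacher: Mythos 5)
Your overall architecture matches the paper's (distributed two-universal hashing for concealment, non-redundancy plus typicality for bindingness, and a single-letterization converse), but your achievability scheme as described drops the one ingredient that makes it binding: the verifier-issued hash challenge. In the paper's protocol, after Bidder $\ell$ has transmitted $X_\ell^n$ over the MAC, the \emph{verifier} draws a two-universal hash $G_\ell$ with output length $\eta n$ and the bidder must answer with $G_\ell(X_\ell^n[\mathcal{S}_\ell])$ on a random subset $\mathcal{S}_\ell$; the key $F_\ell(\bar{X}_\ell^n)$ is then extracted only from the complementary positions. Your scheme retains only the bidder-chosen $F_\ell$ and the typicality test, and your claim that ``any deviation produces a detectably different output distribution'' overstates what Lemma~\ref{lem1} gives: the typicality test only rejects $\tilde{x}_{\mathcal{L}}^n$ with $d_H(\tilde{x}_{\mathcal{L}}^n,x_{\mathcal{L}}^n)=\Omega(n)$. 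A cheating bidder can flip a single symbol of $X_\ell^n$, pass the typicality test with probability tending to one, and---since it controls $F_\ell$, and even an honestly drawn two-universal $F_\ell$ separates two fixed distinct inputs with high probability---open to $\tilde{a}_\ell = M_\ell \oplus F_\ell(\tilde{x}_\ell^n)\neq a_\ell$. The challenge $G_\ell$ is precisely what defeats such sublinear deviations, since the number of sequences within Hamming distance $O(n^{\alpha})$ is $2^{O(n^{\alpha}\log n)}\ll 2^{\eta n}$. You would need to restore this challenge--response step (and then re-separate the hashed positions from the key-extraction positions so that concealment survives).

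On the converse, two remarks. First, the direction of inheritance is the opposite of what you propose: the paper's full-region converse is native to the non-colluding case, because the Markov chain $(A_\ell,S_\ell)-(M_\ell,X_\ell^n)-(Y^n,S'_\ell)$ of Lemma~\ref{lemmarkov} and the resulting per-bidder estimators $\hat{A}_\ell(V_{\mathcal{L}},X_\ell^n)$ of Lemma~\ref{lem} hold only there; for colluding bidders the paper obtains only a sum-rate converse. So you cannot import the colluding converse on the grounds that the non-colluding requirements are ``at least as strong''---they define a different model, and the per-$\mathcal{T}$ bounds come from Fano's inequality applied to $\hat{A}_{\mathcal{T}}$ together with the concealment constraint, as in~\eqref{eqconvrate}. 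Second, you correctly isolate the question of whether the induced single-letter input distribution $p_{X_{\mathcal{L},U}}$ factorizes across bidders, but you leave it unresolved; the paper's displayed converse chain likewise stops at the bounds $R_{\mathcal{T}}\leq H(X_{\mathcal{T},U}|Y_U)$ under the induced joint distribution, so on this particular step your proposal is no less complete than the written proof, but it is not a finished argument either.
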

\begin{proof}
  See Section~\ref{secach}.
\end{proof}

Similar to the scenario with colluding bidders, the following example illustrates that, for some channels, a multi-bidder coding scheme can again  outperform a time-sharing approach that utilizes single-bidder coding schemes~\cite{winter2003commitment}.
\begin{ex}\label{ex3}
Suppose $L=2$. Consider the non-redundant channels $W$, $W_1$, and $W_2$ in \eqref{channelW}, \eqref{channelW1}, \eqref{channelW2}. For the problem of commitment as described in Definition \ref{def2}, let $\Sigma^{\textup{I}}(W)$ be the sum-rate capacity for the channel $W$ and $C(W_i)$, $i\in\{1,2\}$, be the commitment capacity for the channel $W_i$. Using Theorem \ref{th2}, similar to \eqref{eqlbS}, we have
\begin{align}
\Sigma^{\textup{I}}(W)\geq \max(C(W_1),C(W_2)). \label{lbSI}
\end{align}
Next, we numerically show that the inequality \eqref{lbSI} can be strict
such that using $W$ with a multi-bidder coding scheme achieves a larger sum rate than using time-sharing with a single bidder over $W_1$ and $W_2$. Specifically, for the non-redundant channels $W$, $W_1$, and $W_2$ in \eqref{exW}, \eqref{exW1}, \eqref{exW2},  we have
\begin{align*}
\Sigma^{\textup{I}}(W) &> 1.9645 \textnormal{ bits/channel use}\\
&> 0.9512 \textnormal{ bits/channel use} \\
&> \max(C(W_1),C(W_2)). \end{align*}
\end{ex}

\subsection{Single-Bidder Multi-Verifier Model and Results}
\label{sec:single-bidder-multi}

\begin{figure}
  \centering
    \includegraphics[width=7cm]{./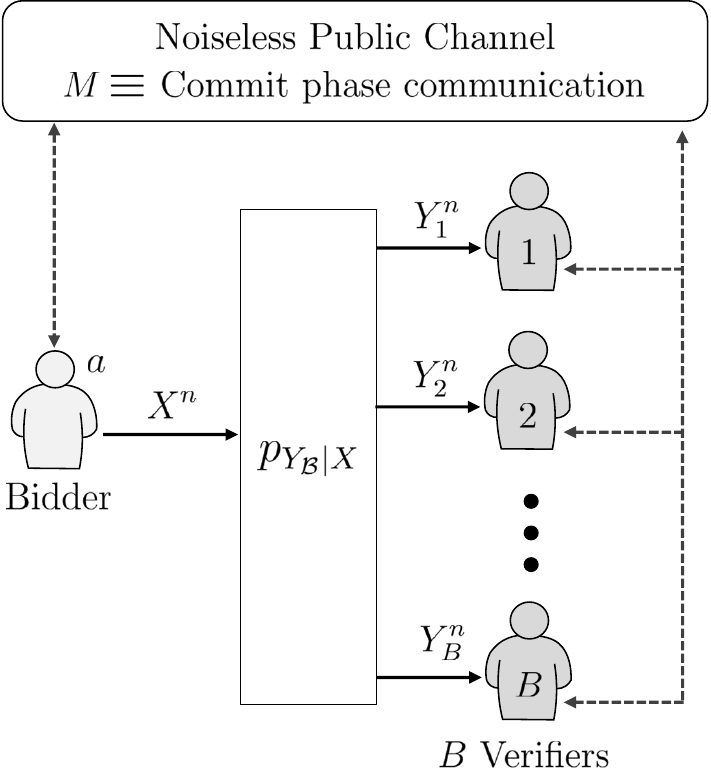}
  \caption{Single-bidder multi-verifier commitment over a broadcast channel.}
  \label{fig:single-bid-multi-ver}
\end{figure}

We now consider the model illustrated in Figure~\ref{fig:single-bid-multi-ver}, in which a single bidder attempts to commit a message to $B\in \mathbb{N}^*$ verifiers using a noiseless channel and a noisy broadcast channel. The verifiers are indexed in the set  $\mathcal{B} \triangleq \llbracket 1 , B \rrbracket$ and are assumed non-colluding. The broadcast channel is characterized by a finite  input alphabet  $\mathcal{X}$, $B$ finite output alphabets $(\mathcal{Y}_b)_{b \in \mathcal{B}}$, and a transition probability $p_{Y_{\mathcal{B}}|X }$, where we use the same notation as in Section \ref{sec:multi-bid-single-ver}. 
For any $b\in \mathcal{B}$, we also consider the channel $W^{(b)} \triangleq ( \mathcal{X},\mathcal{Y}_{b}, p_{Y_{b}|X })$.  
Throughout this section, we assume that  $W^{(b)}$ is non-redundant for any $b\in \mathcal{B}$.

For $R\in\mathbb{R}_+$, a $(2^{n{R}},n)$ commitment scheme consists~of:
\begin{itemize}
\item A sequence $a \in \mathcal{A} \triangleq \llbracket 1 , 2^{nR} \rrbracket$ that the bidder wants to commit to;
\item A noiseless public channel between the bidder and the verifiers; 
\item Local randomness $S \in \mathcal{R}$ at the bidder;
\item Local randomness $S'_b \in \mathcal{R}_b'$, $b \in \mathcal{B}$, at Verifier $b$.
\end{itemize}
The scheme then operates in two phases as follows:
\begin{enumerate}
\item Commit phase: We use the same notation as in Section \ref{sec:multi-bid-single-ver}. For $i\in \llbracket 1 , n \rrbracket$, the bidder sends $X_{i}(a,S,(M'_{b,1:i-1,1:r_{i-1}})_{b\in\mathcal{B}})$ over the channel and engage in $r_i$ rounds of noiseless communication with the verifiers, i.e., for $j \in \llbracket 1 , r_i \rrbracket$ and $b\in\mathcal{B}$, the bidder  sends  $M_{b,i,j} (a,S,M'_{b,1:i,1:j-1})$ and Verifier $b$ replies $M'_{b,i,j} (S'_b,M_{b,1:i,1:j},Y_b^i)$.

We denote the collective noiseless  communication between the bidder and Verifier $b\in \mathcal{B}$ by $M_b$ and define $M\triangleq (M_b)_{b\in\mathcal{B}}$. For $b\in  \mathcal{B}$, define $V_{b} \triangleq (Y_{b}^n, M, S'_{b})$.
 \item Reveal phase: Suppose that some verifiers may have dropped out of the network after the commit phase and  $\mathcal{A} \subset \mathcal{B}$ is the set of available verifiers with $|\mathcal{A}|\geq 1$. The bidder reveals $(a,S)$ and chooses a verifier $b^{\star} \in \mathcal{A}$ at random.   Verifier $b^{\star}$ performs a test $\beta(V_{b^{\star}}, a, S ) $ that returns $1$ if the sequence $a$ is accepted and $0$ otherwise.
\end{enumerate}

\begin{defn} \label{def3}
A rate  $R$ is achievable if there exists a sequence of $(2^{n{R}},n)$ commitment schemes such that for any  $\tilde S\in \mathcal{R}$, $ a,\tilde{a} \in \mathcal{A}$ such that $a \neq \tilde{a}$,
\begin{align}
  \lim_{n\to \infty} \max_{b\in \mathcal{B}} \mathbb{P}[ \beta(V_{b}, a , S ) =  0] &= 0,  \text{ (correctness) }  \\
 \lim_{n\to \infty} \max_{b\in \mathcal{B}} I(A; V_{b}) &= 0, \text{ (concealement) } \label{eqconcealbr}\\
  \lim_{n\to \infty}\max_{b\in \mathcal{B}} \mathbb{P}[\beta(V_{b }, a , S )\! =\!  1 \! =\! \beta (V_{b}, \tilde{a} , \tilde S )]    &= 0.  \text{ (bindingness) }
\end{align}
The supremum of all achievable rates is the commitment capacity.
\end{defn}
The correctness, concealment, and bindingness requirements in Definition \ref{def3} can be interpreted similarly to those in the  settings discussed in Section \ref{sec:multi-bid-single-ver}.  Note that Definition \ref{def3} ensures that correctness, concealement, and bindingness are satisfied irrespective of $\mathcal{A}\subset\mathcal{B} $ such that $|\mathcal{A}|\geq 1$, and for any choice $b^{\star}\in \mathcal{A}$.

Our main result for the multi-verifier commitment setting is to characterize the commitment capacity as follows.

\begin{thm} \label{thm3}
  The commitment capacity is 
 $$\max_{p_X\in {\mathcal{P}}(\mathcal{X})} \min_{b\in \mathcal{B}} H(X|Y_{b}).$$
\end{thm}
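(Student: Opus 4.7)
The plan is to lift the two-party non-redundant commitment scheme of \cite{winter2003commitment} to the broadcast setting. Let $p_X^\star$ achieve $\max_{p_X}\min_{b\in\mathcal{B}} H(X|Y_b)$ and fix a rate $R$ strictly below this maximum. During the commit phase the bidder sends $X^n$ i.i.d.\ from $p_X^\star$ over the broadcast channel and then uses the noiseless channel to publish a two-universal hash function $h:\mathcal{X}^n\to\{0,1\}^{nR}$ together with $M = A\oplus h(X^n)$, where $A$ is the uniform committed message; during the reveal phase the bidder discloses $(A,X^n)$ and the selected verifier $b^\star$ accepts iff $(X^n,Y_{b^\star}^n)$ is jointly typical for $p_X^\star\, W^{(b^\star)}$ and $M = A\oplus h(X^n)$. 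Correctness follows from the joint-AEP. For concealment, the leftover hash lemma applied verifier-by-verifier gives, for each $b$, that $h(X^n)$ is nearly uniform conditioned on $V_b=(Y_b^n,M,S'_b)$ as long as $R<H(X|Y_b)$; since $R<\min_b H(X|Y_b)$, this yields $\max_{b} I(A;V_b)\to 0$, which is precisely \eqref{eqconcealbr}. Bindingness is obtained by invoking the non-redundancy hypothesis for each marginal $W^{(b)}$: any alternate pair $(\tilde A,\tilde X^n)$ the cheater might later declare and that still passes the typicality test against $Y_{b^\star}^n$ must (up to exponentially small failure) have essentially the same type and support as $X^n$, and the two-universal collision probability $2^{-nR}$ makes $h(\tilde X^n)=h(X^n)$ with $\tilde A\neq A$ vanishingly unlikely.

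\textbf{Converse plan.} The key observation is that, for every individual verifier $b\in\mathcal{B}$, an achievable scheme induces a valid single-verifier commitment scheme over the marginal non-redundant channel $W^{(b)}$. I would therefore reuse the single-user converse reasoning in the style of \cite{winter2003commitment}: from the uniformity of $A$ one gets $H(A\mid V_b)\geq nR-\epsilon_n$ by concealment, while bindingness forces $A$ to be essentially a function of $(V_b,X^n)$, i.e., $H(A\mid V_b,X^n)\leq n\epsilon_n$. Subtracting yields
\begin{align*}
nR \leq I(A;X^n\mid V_b) + n\epsilon_n \leq H(X^n\mid Y_b^n) + n\epsilon_n.
\end{align*}
Introducing a uniform time-sharing variable $T\in\llbracket 1,n\rrbracket$ and setting $X\triangleq X_T$, $Y_b\triangleq Y_{b,T}$, the bound $H(X^n\mid Y_b^n)\leq\sum_t H(X_t\mid Y_{b,t}) = nH(X\mid Y_b,T)\leq nH(X\mid Y_b)$ gives $R\leq H(X\mid Y_b)+\epsilon_n$. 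Crucially, the induced distribution $p_X$ of $X=X_T$ is one and the same for every $b$, because the channel input is produced obliviously to which verifier will ultimately perform the reveal-phase test. Taking the minimum over $b$ and the supremum over the induced $p_X$ then produces $R\leq \max_{p_X}\min_{b}H(X\mid Y_b)$ in the limit.

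\textbf{Anticipated difficulty.} I expect the hardest step to be the bindingness analysis, rather than concealment. The cheating bidder observes the entire multi-verifier interactive transcript $M=(M_b)_{b\in\mathcal{B}}$, including the noiseless messages addressed to verifiers who subsequently drop out, and may try to leverage this extra side information to tailor the cheat to the specific verifier $b^\star$ revealed later. Showing that this adaptivity cannot defeat the non-redundancy of \emph{each} marginal $W^{(b)}$—while simultaneously propagating the single-user bindingness estimate in a manner uniform over $b\in\mathcal{B}$, as demanded by \eqref{eqconcealbr} and its bindingness counterpart in Definition~\ref{def3}—is the technical core. The interactive noiseless rounds $r_i$ must be absorbed into the $V_b$ conditioning without disturbing the chain $A\to X^n\to Y_b^n$ needed in the converse, which is where the argument has to be executed with care.
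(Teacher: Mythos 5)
Your converse plan coincides with the paper's: fix a verifier $b$, establish the Markov chain $(A,S)-(M,X^n)-(Y_b^n,S'_b)$ through the interactive rounds, combine correctness and bindingness to build an estimator of $A$ from $(V_b,X^n)$ with vanishing error (the step you correctly identify as the technical core), and single-letterize $H(X^n\mid Y_b^n)$, using the fact that the induced input distribution is common to all $b$ to obtain the $\max\min$ form. No objection there, beyond noting that the "bindingness forces $H(A\mid V_b,X^n)\leq n\epsilon_n$" claim is precisely the part that consumes most of the paper's converse.

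The achievability, however, has a genuine and fatal gap in the bindingness argument. In your scheme the only things binding the bidder are the typicality test and the hash $h$ used for concealment, and $h$ is published by (hence known to, indeed chosen by) the bidder. Non-redundancy through the typicality test only forces any revealed $\tilde x^n$ to lie within Hamming distance $O(n^{\alpha})$, $\alpha<1$, of the transmitted sequence; it does not pin $X^n$ down. A cheating bidder can thus transmit $x^n$ honestly and, at reveal time, present either $(M\oplus h(x^n),\,x^n)$ or $(M\oplus h(\tilde x^n),\,\tilde x^n)$ for some $\tilde x^n$ differing from $x^n$ in a single coordinate: both pass typicality (a one-symbol change perturbs the empirical type by $O(1/n)$), both satisfy $a=M\oplus h(\cdot)$ by construction, and the two declared messages differ whenever $h(x^n)\neq h(\tilde x^n)$, which is the generic case. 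Observe that the attack exploits a \emph{non-collision}, so the $2^{-nR}$ collision bound you invoke targets the wrong event; moreover two-universality only gives guarantees over the random draw of the hash for inputs fixed in advance, which is vacuous when the bidder observes or controls $h$ before choosing $\tilde X^n$. The paper closes this hole with a challenge--response step absent from your proposal: each verifier draws its own two-universal hash $G_b$ with output length $\eta n$, and the bidder must commit $T_b=G_b(X^n[\mathcal{S}])$ on a subset $\mathcal{S}$ of coordinates during the commit phase; at reveal time any $\tilde x^n\neq x^n$ must still reproduce $T_{b^\star}$, and a union bound of $2^{-\eta n}$ over the $2^{O(n^{\alpha}\log n)}$ candidates in the Hamming ball defeats the cheat. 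The concealment hash is then applied only to the remaining coordinates $\bar X^n=X^n[\mathcal{S}^c]$, which is why the rate is computed on $\bar n$ symbols. Your per-verifier use of the leftover hash lemma for concealment, yielding the rate $\min_{b}H(X\mid Y_b)$, otherwise matches the paper.
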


\begin{proof}
    See Section~\ref{sec:proof-theorem-single-bid-multi-ver}.
\end{proof}
Theorem \ref{thm3} shows that that a non-zero commitment capacity can still be achievable, even if all but one verifier drop out of the network after the commit phase.
\begin{ex}
Assume  that  the bidder and the $B$ verifiers are connected through $B$  parallel channels that are non-redundant and identical. By Theorem \ref{thm3}, if all but one verifier drop out after the commit phase, the commitment capacity is $\max_{p_X\in {\mathcal{P}}(\mathcal{X})}  H(X|Y_{1})$, which can be non-zero for some non-redundant channels, e.g., the channel in \eqref{exW1}.
\end{ex}

\section{Proof of Theorem~\ref{th1} and Theorem~\ref{th2}}
\label{sec:proof-theo-mac}

\subsection{Achievability scheme} \label{secach}
The achievability proof is inspired by the proof for point-to-point channel from \cite{winter2003commitment} and relies on hash challenges, typicality, and privacy amplification \cite{bennett2002generalized},  adapted to a distributed setting.  The achievability proof holds for  Theorem~\ref{th1} and Theorem~\ref{th2}, differing only in the set of allowed input distributions to the channel. To simultaneously capture both theorems, we define
\begin{align*}
\bar{\mathcal{P}}(\mathcal{X}_{\mathcal{L}}) \triangleq 
\begin{cases} \mathcal{P}^{\textup{I}}(\mathcal{X}_{\mathcal{L}}) &  \text{if the bidders are non-colluding}\\
\mathcal{P}(\mathcal{X}_{\mathcal{L}}) & \text{if the bidders are colluding} \end{cases}.
\end{align*}
Fix $p_{X_{\mathcal{L}}}\in \bar{\mathcal{P}}(\mathcal{X}_{\mathcal{L}})$. Define $q_{X_{\mathcal{L}}Y} \triangleq  p_{X_{\mathcal{L}}} p_{Y|X_{\mathcal{L}}}$. Consider $X_{\mathcal{L}}^n$ distributed according to $p^{\otimes n}_{X_{\mathcal{L}}}$.

\noindent{}\textbf{Commit Phase}: Bidder ${\ell} \in\mathcal{L}$ commits to $a_{\ell}$ as follows.
\begin{itemize}
\item Bidder $\ell$ sends  the sequence $X^n_{\ell}$ over the multiple access channel $W$. The verifier observes $Y^n$.
\item The verifier chooses a function $G_{\ell}:\mathcal{X}^{n\mu}_{\ell} \to \{0,1\}^{\eta n} $ at random in a family of two-universal hash functions with $\mu>\eta >0$, and sends $G_{\ell}$ to Bidder $\ell$ over the noiseless channel.
\item Bidder $\ell$ selects a set $\mathcal{S}_{\ell} \subset \llbracket 1 , n \rrbracket$ of size $|\mathcal{S}_{\ell}|=\mu n$ uniformly at random and sends $G_{\ell}(X^n_{\ell}[\mathcal{S}_{\ell}])$ and $\mathcal{S}_{\ell}$  to the verifier over the noiseless channel. Let $T_{\ell}$ be the corresponding sequence observed by the verifier.
\item Bidder $\ell$ sets 
 $\bar{X}_{\ell}^n \triangleq {X}_{\ell}^n[(\bigcup_{{\ell} \in \mathcal{L}} \mathcal{S}_{\ell} )^c]$ and $\bar{n} \triangleq n- |\bigcup_{{\ell} \in \mathcal{L}} \mathcal{S}_{\ell} |$. It then chooses  a function $F_{\ell} : \mathcal{X}^{\bar{n}}_{\ell} \to \{0,1\}^{r_{\ell}}$ at random in a family of two-universal hash functions, and sends $F_{\ell}$ and $E_{l} \triangleq a_{\ell} \oplus F_{\ell}(\bar{X}_{\ell}^n)$ over the noiseless channel.
\end{itemize}
\textbf{Reveal Phase}: Bidder ${\ell} \in\mathcal{L}$ reveals $a_{\ell}$ as follows.
\begin{itemize}
\item Bidder $\ell$ sends $X_{\ell}^n$ and $a_{\ell}$ to the verifier  over the noiseless channel.
\item  The verifier tests that
\begin{enumerate}[(i)]
\item $(X_{\mathcal{L}}^n,Y^n)\in \mathcal{T}_{\epsilon}^n(q_{X_{\mathcal{L}}Y})$;
\item $T_{\ell} = G_{\ell}(X^n_{\ell}[\mathcal{S}_{\ell}]), \forall \ell \in \mathcal{L}$;
\item $a_{\ell} = E_{\ell} \oplus F_{\ell}(\bar{X}^n_{\ell}), \forall \ell \in \mathcal{L}$;
\end{enumerate}
and outputs 1 if all conditions are satisfied, and 0 else.
\end{itemize}

\subsubsection{Definitions}

The following notions of typicality will prove useful. Let $\epsilon>0$.
For $x^n_{\mathcal{L}}\in \mathcal{X}_{\mathcal{L}}^n$, define $\mathcal{T}^n_{W,\epsilon}({x}_{\mathcal{L}}^n)$ and $\mathcal{T}_{\epsilon}^n(q_{X_{\mathcal{L}}Y})$ in \eqref{eqdeftyp1} and \eqref{eqdeftyp2}.

\begin{figure*}[t!]

\begin{align*}
\mathcal{T}^n_{W,\epsilon}({x}_{\mathcal{L}}^n)
& \triangleq \left\{ y^n \in \mathcal{Y}^n : \forall x_{\mathcal{L}}, \forall y,\left|\sum_{i=1}^n \frac{\mathds{1} \{ (x_{\mathcal{L}}, y) \!=\! (x_{\mathcal{L},i}, y_i) \}}{n}  \! - \! W_{x_{\mathcal{L}}}( y) \sum_{i=1}^n \frac{\mathds{1} \{ x_{\mathcal{L}} \!=\! x_{\mathcal{L},i} \}
}{n}  \right|\! \leq\! \epsilon  \right. \\
&\phantom{---}\left. \text{ and } W_{x_{\mathcal{L}}}( y) = 0 \implies \sum_{i=1}^n \frac{\mathds{1} \{ (x_{\mathcal{L}}, y) \!=\! (x_{\mathcal{L},i}, y_i) \}}{n} =0 \right\} . \numberthis \label{eqdeftyp1}\\
\mathcal{T}_{\epsilon}^n(q_{X_{\mathcal{L}}Y}) &\triangleq  \left\{ (x_{\mathcal{L}}^n,y^n ) \in \mathcal{X}_{\mathcal{L}}^n\times \mathcal{Y}^n : \forall x_{\mathcal{L}}, \forall y,  \left| \sum_{i=1}^n \frac{\mathds{1} \{ (x_{\mathcal{L}}, y)\! =\! (x_{\mathcal{L},i}, y_i) \}}{n}   - q_{X_{\mathcal{L}}Y}(x_{\mathcal{L}}, y) \right|\leq \epsilon \right.   \\
&\phantom{---}\left. \text{ and }    q_{X_{\mathcal{L}}Y}(x_{\mathcal{L}}, y) = 0 \implies \sum_{i=1}^n \frac{\mathds{1} \{ (x_{\mathcal{L}}, y)\! = \!(x_{\mathcal{L},i}, y_i) \}}{n} =0 \right\}. \numberthis \label{eqdeftyp2}
\end{align*}
\hrulefill
\end{figure*}

\subsubsection{Correctness}
When the parties are not cheating, standard typicality arguments~\cite{Kramer2008Topics} show that  $\lim_{n\to \infty} \mathbb{P}[(X_{\mathcal{L}}^n,Y^n)\in \mathcal{T}_{\epsilon}^n(q_{X_{\mathcal{L}}Y})] =1$. Consequently, part (i) of the verifier test passes, while part (ii) and (iii) are automatically true, so that the verifier estimates $a_{\mathcal{L}}$ with vanishing probability of error in the reveal phase in the absence of cheating.

\subsubsection{Concealment} \label{secanalaysisconcealment}
Define $V' \triangleq (\mathcal{S}_{\mathcal{L}},G_{\mathcal{L}},F_{\mathcal{L}} ,T_{\mathcal{L}},Y^n)$ and $V\triangleq (V' ,E_{\mathcal{L}} )$, where  $\mathcal{S}_{\mathcal{L}} \triangleq (\mathcal{S}_{\ell})_{\ell \in \mathcal{L}}$, $G_{\mathcal{L}} \triangleq (G_{\ell})_{\ell \in \mathcal{L}}$, $F_{\mathcal{L}} \triangleq (F_{\ell})_{\ell \in \mathcal{L}}$, $T_{\mathcal{L}} \triangleq (T_{\ell})_{\ell \in \mathcal{L}}$, $E_{\mathcal{L}} \triangleq (E_{\ell})_{\ell \in \mathcal{L}}$.  Note that $V$ captures all the information available to the verifier at the end of the commit phase. Define also $K_{\mathcal{L}}  \triangleq (F_{\ell}(\bar{X}^n_{l}))_{\ell \in \mathcal{L}}$, which represents the sequence of hashes used to protect the committed strings by the bidders, and $\bar{Y}^n \triangleq Y^n[ \mathcal{S}^c]$, where $\mathcal{S} \triangleq \bigcup_{{\ell} \in \mathcal{L}} \mathcal{S}_{\ell}$. Then, we have
\begin{align*}
I(A_{\mathcal{L}};V) 
& \stackrel{(a)}= I(A_{\mathcal{L}}; E_{\mathcal{L}}) + I(A_{\mathcal{L}};V'|E_{\mathcal{L}}) \\
& \leq I(A_{\mathcal{L}}; E_{\mathcal{L}}) + I(A_{\mathcal{L}} E_{\mathcal{L}};V') \\
& \stackrel{(b)}= I(A_{\mathcal{L}}; E_{\mathcal{L}}) + I(A_{\mathcal{L}} K_{\mathcal{L}};V') \\
& =  I(A_{\mathcal{L}}; E_{\mathcal{L}}) + I( K_{\mathcal{L}};V') + I(A_{\mathcal{L}};V' |K_{\mathcal{L}}) \\
& \leq  I(A_{\mathcal{L}}; E_{\mathcal{L}}) + I( K_{\mathcal{L}};V') + I(A_{\mathcal{L}};V' K_{\mathcal{L}}) \\
& \stackrel{(c)}=  I(A_{\mathcal{L}}; E_{\mathcal{L}}) + I( K_{\mathcal{L}};V')   \\
& \stackrel{(d)}\leq r_{\mathcal{L}} -H( E_{\mathcal{L}}| A_{\mathcal{L}}) + I( K_{\mathcal{L}};V')   \\
& \stackrel{(e)}= r_{\mathcal{L}} -H( K_{\mathcal{L}}) + I( K_{\mathcal{L}};V') \\
& = r_{\mathcal{L}} -H( K_{\mathcal{L}}) + I( K_{\mathcal{L}};F_{\mathcal{L}} \bar{Y}^n) \\
&\phantom{--}+ I ( K_{\mathcal{L}}; \mathcal{S}_{\mathcal{L}}  G_{\mathcal{L}}T_{\mathcal{L}} Y^n [\mathcal{S}  ]|F_{\mathcal{L}} \bar{Y}^n)\\
& \leq  r_{\mathcal{L}} -H( K_{\mathcal{L}}) + I( K_{\mathcal{L}};F_{\mathcal{L}} \bar{Y}^n) \\
&\phantom{--}+ I ( F_{\mathcal{L}} \bar{Y}^n K_{\mathcal{L}}; \mathcal{S}_{\mathcal{L}}  G_{\mathcal{L}}T_{\mathcal{L}} Y^n [\mathcal{S}])\\
& \stackrel{(f)}=  r_{\mathcal{L}} -H( K_{\mathcal{L}}) + I( K_{\mathcal{L}};F_{\mathcal{L}} \bar{Y}^n) , \numberthis \label{eqconceal}
\end{align*}
where
\begin{enumerate}[(a)]
    \item holds by the chain rule and the definition of $V$;
    \item holds by the definition of $E_{\mathcal{L}}$;
    \item holds by independence between $A_{\mathcal{L}}$ and $(V', K_{\mathcal{L}})$;
    \item holds with $r_{\mathcal{L}} \triangleq \sum_{{\ell} \in\mathcal{L}} r_{\ell}$;
    \item holds by the definition of $E_{\mathcal{L}}$;
    \item holds by independence between $( F_{\mathcal{L}}, \bar{Y}^n, K_{\mathcal{L}})$ and $( \mathcal{S}_{\mathcal{L}}  G_{\mathcal{L}}T_{\mathcal{L}} Y^n [\mathcal{S} ])$.
\end{enumerate}
Then, we upper bound the right-hand side of \eqref{eqconceal} using the version of the distributed leftover hash lemma \cite{wullschleger2007oblivious} in Lemma \ref{lemloh}, and we lower bound the min-entropies \cite{renner2008security,holenstein2011randomness} appearing in  Lemma~\ref{lemloh} using Lemma~\ref{lems1}.
          \begin{lem}[{Distributed leftover hash lemma, e.g., \cite[Lemma~3]{chou2021distributed}}] \label{lemloh}
 Consider a sub-normalized non-negative function  $ p_{ X_{\mathcal L}Z}$ defined over $\bigtimes_{\ell \in \mathcal{L}}\mathcal{X}_{l}\times \mathcal{Z}$, where $X_{\mathcal{L}} \triangleq (X_{\ell})_{\ell \in \mathcal{L}}$ and, $\mathcal{Z}$, $\mathcal{X}_{l}$, $\ell \in\mathcal{L}$, are finite alphabets.    
 For $\ell \in \mathcal{L}$, let  $F_{\ell}:\{0,1\}^{n_{\ell}} \longrightarrow \{0,1\}^{r_{\ell}}$, be uniformly chosen in a family $\mathcal{F}_{\ell}$ of two-universal hash functions. For any $\mathcal{T} \subseteq \mathcal{L}$, define $r_{\mathcal{T}}\triangleq \sum_{i \in \mathcal{T}}r_i$. Define also ${F}_{\mathcal{L}}\triangleq (F_{\ell})_{\ell \in \mathcal{L}}$ and 
   $
        F_{\mathcal{L}}( X_{\mathcal{L}})\triangleq \left( F_{\ell}(X_{\ell})\right)_{{\ell} \in\mathcal{L}}$. Then, for any $q_Z$ defined over $\mathcal{Z}$ such that $\textup{supp}(q_Z) \subseteq \textup{supp}(p_Z)$, we have
   \begin{align}
      \mathbb{V}({{p}_{F_{\mathcal{L}}( X_{\mathcal{L}})  F_{\mathcal{L}}Z}}, p_{U_{\mathcal K}} p_{U_{\mathcal F}} p_Z) \leq   {{{\sqrt{{ \displaystyle\sum_{\substack{{\mathcal T\subseteq\mathcal L}, {\mathcal T \neq \emptyset}}}}\!\!2^{r_{\mathcal T}-H_{\infty}(p_{X_{\mathcal T}Z}|q_Z)}}}}}, \label{eq:lohl}
       \end{align}
      where  $p_{U_{\mathcal K}}$ and  $p_{U_{\mathcal F}}$ are the uniform distributions over $\llbracket 1,2^{r_{{\mathcal{L}}}} \rrbracket$ and $\llbracket 1,\prod_{\ell \in \mathcal{L}} |\mathcal{F}_{\ell}| \rrbracket$, respectively, and  for any $\mathcal T\subseteq\mathcal L, \mathcal T \neq \emptyset$, 
$$H_{\infty}(p_{X_{\mathcal{T}} Z}|q_{Z})\triangleq -\log \displaystyle \max_{\substack{{x_{\mathcal{T}} \in \mathcal{X}_{\mathcal{T}}}\\ z \in \textup{supp}(q_{Z}) }}\frac{p_{X_{\mathcal{T}} Z}(x_{\mathcal{T}}, z)}{q_{Z}(z)}.$$
          \end{lem}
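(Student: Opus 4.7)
My plan is to reduce the bound on variational distance to a collision-probability computation and then exploit the product structure of the independent hash families via a subset-decomposition argument. First, I would pass from variational distance to collision probability using Jensen's inequality together with the standard Cauchy--Schwarz bound $4\,\mathbb{V}(p,U_N)^2 \leq N\cdot \mathrm{CP}(p)-1$, where $\mathrm{CP}(p)\triangleq \sum_x p(x)^2$. Writing
\begin{align*}
\mathbb{V}(p_{F_\mathcal{L}(X_\mathcal{L}) F_\mathcal{L} Z}, p_{U_\mathcal{K}} p_{U_\mathcal{F}} p_Z) = \mathbb{E}_{F_\mathcal{L}, Z}\!\left[\mathbb{V}(p_{F_\mathcal{L}(X_\mathcal{L})\mid F_\mathcal{L}, Z}, p_{U_\mathcal{K}})\right]
\end{align*}
and applying Jensen to pull the expectation outside the square reduces the task to upper bounding $\mathbb{E}_{F_\mathcal{L}}\!\left[\mathrm{CP}(F_\mathcal{L}(X_\mathcal{L}) \mid Z=z, F_\mathcal{L})\right]$ pointwise in $z$ and then averaging over $z$.

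Next, I would introduce two conditionally independent copies $X_\mathcal{L}, X'_\mathcal{L}$ of the input given $Z=z$ and expand
\begin{align*}
\mathbb{E}_{F_\mathcal{L}}\!\left[\mathrm{CP}(F_\mathcal{L}(X_\mathcal{L}) \mid Z=z, F_\mathcal{L})\right] = \sum_{x_\mathcal{L}, x'_\mathcal{L}} p_{X_\mathcal{L}|Z}(x_\mathcal{L}|z)\, p_{X_\mathcal{L}|Z}(x'_\mathcal{L}|z)\, \Pr_{F_\mathcal{L}}\!\left[F_\mathcal{L}(x_\mathcal{L})=F_\mathcal{L}(x'_\mathcal{L})\right].
\end{align*}
By independence of the hash families and two-universality of each, $\Pr_{F_\ell}[F_\ell(x_\ell)=F_\ell(x'_\ell)] \leq \mathds{1}\{x_\ell = x'_\ell\} + 2^{-r_\ell}$, so multiplying over $\ell$ and expanding via the identity $\prod_\ell (a_\ell+b_\ell) = \sum_{\mathcal{T}\subseteq\mathcal{L}} \prod_{\ell \in \mathcal{T}} b_\ell \prod_{\ell \notin \mathcal{T}} a_\ell$ decomposes the bound over subsets $\mathcal{T}$ encoding the coordinates on which the hashes ``accidentally'' collide. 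Summing over $x_\mathcal{L}, x'_\mathcal{L}$ yields $\sum_{\mathcal{T}\subseteq \mathcal{L}} 2^{-r_\mathcal{T}}\, \mathrm{CP}(X_{\mathcal{T}^c}\mid Z=z)$. Multiplying through by $2^{r_\mathcal{L}}$, substituting $\mathcal{S}=\mathcal{L}\setminus\mathcal{T}$, and observing that the $\mathcal{T}=\mathcal{L}$ term contributes exactly $1$ to cancel the $-1$ of the collision-probability bound, one is left with $\sum_{\mathcal{S}\neq\emptyset} 2^{r_\mathcal{S}}\, \mathrm{CP}(X_\mathcal{S}\mid Z=z)$ inside the square.

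To finish, I would bound $\mathrm{CP}(X_\mathcal{S}\mid Z=z) \leq \max_{x_\mathcal{S}} p_{X_\mathcal{S}|Z}(x_\mathcal{S}|z)$ and then convert the $z$-average into the $q_Z$-relative min-entropy by multiplying and dividing by $q_Z(z)$ inside the outer sum over $z$: the ratio $p_{X_\mathcal{S} Z}(x_\mathcal{S}, z)/q_Z(z) \leq 2^{-H_\infty(p_{X_\mathcal{S} Z}|q_Z)}$ can then be pulled out, after which the sub-normalization $\sum_{x_\mathcal{S}, z} p_{X_\mathcal{S} Z}(x_\mathcal{S}, z)\leq 1$ closes the estimate. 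Summing over non-empty $\mathcal{S}$ and taking the square root yields \eqref{eq:lohl}. The main obstacle, in my view, is the bookkeeping around the sub-normalization of $p_{X_\mathcal{L} Z}$ and the freedom in the reference $q_Z$: the classical leftover-hash argument assumes $p$ is a genuine distribution, so the present version requires threading $q_Z$ carefully through the collision-probability bound without introducing unwanted Radon--Nikodym factors. Once the normalization is handled, the subset decomposition together with two-universality of each family does the heavy lifting cleanly.
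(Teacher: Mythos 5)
The paper does not prove this lemma itself—it imports it verbatim from \cite[Lemma~3]{chou2021distributed}—but your argument is exactly the standard proof of that result: Jensen plus Cauchy--Schwarz (weighted by $q_Z$) to reduce the variational distance to an expected collision probability, the per-coordinate two-universality bound $\Pr[F_\ell(x_\ell)=F_\ell(x'_\ell)]\leq \mathds{1}\{x_\ell=x'_\ell\}+2^{-r_\ell}$ expanded over subsets $\mathcal T$, cancellation of the $\mathcal T=\mathcal L$ term against the subtracted constant, and the bound $\mathrm{CP}\leq 2^{-H_\infty}$ via the $q_Z$-relative min-entropy. The normalization issue you flag is handled exactly as you suggest (the weighted Cauchy--Schwarz makes the $\mathcal S=\emptyset$ term equal $\sum_z p_Z(z)^2/q_Z(z)$, which cancels exactly even for sub-normalized $p$), so the proposal is correct and matches the cited source's approach.
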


\begin{lem}[{\cite[Lemma 4]{chou2021distributed}}] \label{lems1}
        Let $(\mathcal{X}_{\ell})_{\ell \in \mathcal{L}}$ be $L$ finite alphabets and define for $\mathcal{T}\subseteq \mathcal{L}$, $\mathcal{X}_{\mathcal{T}}\triangleq \bigtimes_{\ell \in \mathcal{T}} \mathcal{X}_{\ell}$. Consider the random variables  $X^{n}_{\mathcal{L}}\triangleq ({X}^{n}_{\ell})_{\ell \in \mathcal{L}}$ and $Z^{n}$ defined over $\mathcal{X}_{\mathcal{L}}^n  \times\mathcal{Z}^n$ with probability distribution $q_{X^{n}_{\mathcal{L}} Z^{n}}\triangleq  q^{\otimes n}_{X_{\mathcal{L}} Z}$. For any $\epsilon>0$, there exists a subnormalized non-negative function $w_{X^{n}_{\mathcal{L}} Z^{n}}$ defined over $\mathcal{X}^n_{\mathcal{L}} \times\mathcal{Z}^n$ such that $\mathbb{V}(q_{X^{n}_{\mathcal{L}} Z^{n}},w_{X^{n}_{\mathcal{L}} Z^{n}})\leq\epsilon$ and
       \begin{align*}
           \forall \mathcal{T}\subseteq \mathcal{L}, H_{\infty}(w_{X^{n}_{\mathcal{T}} Z^{n}}|q_{Z^{n}})\geq n H({X_{\mathcal{T}}}|Z)-n \delta_{\mathcal{T}}(n),
       \end{align*}
       where $\delta_{\mathcal{T}}(n)\triangleq (\log (\lvert\mathcal{X}_{\mathcal{T}}\rvert+3))\sqrt{\frac{2}{n}(L+\log(\frac{1}{\epsilon}))}$. 
\label{lem2}
       \end{lem}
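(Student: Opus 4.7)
My plan is to construct $w$ as the restriction of $q^{\otimes n}_{X_{\mathcal{L}}Z}$ to a carefully chosen ``good'' set of sequences, so that the min-entropy bound is automatic by construction while the variational-distance bound follows from concentration of the empirical log-likelihoods around the corresponding conditional entropies. Concretely, for each $\mathcal{T}\subseteq\mathcal{L}$ I would define the typicality-like set
\begin{align*}
\mathcal{A}_{\mathcal{T}} \triangleq \Big\{(x^n_{\mathcal{L}},z^n):\; & -\tfrac{1}{n}\textstyle\sum_{i=1}^n \log q_{X_{\mathcal{T}}|Z}(x_{\mathcal{T},i}|z_i) \\
& \geq H(X_{\mathcal{T}}|Z) - \delta_{\mathcal{T}}(n)\Big\},
\end{align*}
set $\mathcal{A}\triangleq\bigcap_{\mathcal{T}\subseteq\mathcal{L}}\mathcal{A}_{\mathcal{T}}$, and define $w(x^n_{\mathcal{L}},z^n) \triangleq q^{\otimes n}(x^n_{\mathcal{L}},z^n)\mathds{1}\{(x^n_{\mathcal{L}},z^n)\in\mathcal{A}\}$.

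The min-entropy bound then drops out immediately: on $\mathcal{A}_{\mathcal{T}}$, one has $q^{\otimes n}(x^n_{\mathcal{T}}|z^n)\leq 2^{-nH(X_{\mathcal{T}}|Z)+n\delta_{\mathcal{T}}(n)}$, so after marginalization $w_{X^n_{\mathcal{T}}Z^n}(x^n_{\mathcal{T}},z^n) \leq q(z^n)\,2^{-nH(X_{\mathcal{T}}|Z)+n\delta_{\mathcal{T}}(n)}$ uniformly on the support of $w$, which is exactly the desired lower bound on $H_{\infty}(w_{X^n_{\mathcal{T}}Z^n}|q_{Z^n})$. For the variational-distance bound, a union bound over the $2^L$ subsets gives $\mathbb{V}(q^{\otimes n},w)\leq \sum_{\mathcal{T}}\mathbb{P}[\mathcal{A}_{\mathcal{T}}^c]$, and each tail probability would be controlled via Hoeffding's inequality applied to the i.i.d.\ random variables $N_{\mathcal{T},i}\triangleq -\log q_{X_{\mathcal{T}}|Z}(X_{\mathcal{T},i}|Z_i)$, which have mean $H(X_{\mathcal{T}}|Z)$.

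The main obstacle is that $N_{\mathcal{T},i}$ is unbounded in general, since $q_{X_{\mathcal{T}}|Z}$ can place arbitrarily small positive mass on some atoms, making Hoeffding inapplicable directly. I would handle this by a truncation step: discard sequences containing any coordinate on which the conditional probability is below a carefully chosen threshold, and fold this discard into the definition of $\mathcal{A}$. By choosing the threshold of order $1/(|\mathcal{X}_{\mathcal{T}}|+3)$, the total mass lost in variational distance is controlled because there are at most $|\mathcal{X}_{\mathcal{T}}|$ atoms in each fiber, and on the remaining support each $N_{\mathcal{T},i}$ is bounded above by $\log(|\mathcal{X}_{\mathcal{T}}|+3)$; this is precisely the source of the $\log(|\mathcal{X}_{\mathcal{T}}|+3)$ factor in $\delta_{\mathcal{T}}(n)$. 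Hoeffding then yields $\mathbb{P}[\mathcal{A}_{\mathcal{T}}^c]\leq \exp\!\big(-2n\delta_{\mathcal{T}}(n)^2/\log^2(|\mathcal{X}_{\mathcal{T}}|+3)\big)$, and substituting the prescribed $\delta_{\mathcal{T}}(n)$ leaves each tail bounded by $\epsilon\cdot 2^{-L}$, so that summing across the $2^L$ subsets yields $\mathbb{V}(q^{\otimes n},w)\leq\epsilon$.

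The delicate point, and the reason the proof requires care rather than a direct appeal to the AEP, is that a \emph{single} smoothed $w$ must serve simultaneously for every $\mathcal{T}\subseteq\mathcal{L}$. This forces the truncation threshold and concentration deviation to be calibrated jointly across all subsets: the $L$ inside the square root in $\delta_{\mathcal{T}}(n)$ reflects the union-bound cost over $2^L$ events, while the $\log(1/\epsilon)$ arises from demanding that the overall discarded mass stay below $\epsilon$. Once these constants are balanced the two conclusions of the lemma follow cleanly from the construction.
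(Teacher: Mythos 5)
First, for context: the paper does not prove this lemma itself; it is imported verbatim as \cite[Lemma 4]{chou2021distributed}, which in turn rests on the Holenstein--Renner smooth min-entropy bound. Your overall architecture is the right one and matches the standard proof: restrict $q^{\otimes n}$ to the intersection $\mathcal{A}=\bigcap_{\mathcal{T}}\mathcal{A}_{\mathcal{T}}$ of conditional-information-typical sets, note that the min-entropy bound is then automatic (correctly so, because $\mathds{1}\{\mathcal{A}_{\mathcal{T}}\}$ depends only on $(x^n_{\mathcal{T}},z^n)$ and therefore survives the marginalization over $x^n_{\mathcal{T}^c}$, so $w_{X^n_{\mathcal{T}}Z^n}\leq q_{X^n_{\mathcal{T}}Z^n}\mathds{1}\{\mathcal{A}_{\mathcal{T}}\}$), and control $\mathbb{V}(q^{\otimes n},w)$ by a union bound over the $2^L$ subsets; the bookkeeping that attributes the $L$ under the square root to this union bound is also correct.

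The genuine gap is the concentration step. You rightly observe that $N_{\mathcal{T},i}=-\log q_{X_{\mathcal{T}}|Z}(X_{\mathcal{T},i}|Z_i)$ is unbounded, but the proposed truncation does not repair this. With a threshold $\tau$ of order $1/(\lvert\mathcal{X}_{\mathcal{T}}\rvert+3)$, the mass discarded \emph{per coordinate} is only bounded by $\lvert\mathcal{X}_{\mathcal{T}}\rvert\tau$, a constant bounded away from zero; over $n$ i.i.d.\ coordinates the retained mass decays geometrically, so $\mathbb{V}(q^{\otimes n},w)$ tends to $1$ rather than being at most $\epsilon$. To make the total discarded mass (and the induced shift of the truncated mean, which must be $o(1/\sqrt{n})$ for the lower-tail event to remain meaningful) small enough, one is forced to take $\tau=o\bigl(1/(n\lvert\mathcal{X}_{\mathcal{T}}\rvert)\bigr)$, whence the Hoeffding range becomes $\log(1/\tau)=\Theta(\log n)$ and the achievable deviation is $\Theta\bigl(\log n\cdot\sqrt{(L+\log(1/\epsilon))/n}\bigr)$ --- strictly weaker than the stated $\delta_{\mathcal{T}}(n)$, whose coefficient $\log(\lvert\mathcal{X}_{\mathcal{T}}\rvert+3)$ does not grow with $n$. (The weaker bound would still suffice for the asymptotic use of the lemma in Section~\ref{secanalaysisconcealment}, but it does not prove the lemma as stated.) The correct argument dispenses with truncation altogether and bounds the lower tail $\mathbb{P}[\frac{1}{n}\sum_i N_{\mathcal{T},i}<H(X_{\mathcal{T}}|Z)-\delta]$ by a Chernoff/Bernstein-type inequality applied directly to the unbounded variables, exploiting the structural fact that the moment generating function $\mathbb{E}\bigl[q_{X_{\mathcal{T}}|Z}(X_{\mathcal{T}}|Z)^{\lambda}\bigr]=\sum_z q(z)\sum_{x_{\mathcal{T}}}q(x_{\mathcal{T}}|z)^{1+\lambda}$ is explicitly computable and admits a quadratic-in-$\lambda$ bound with leading coefficient $(\log(\lvert\mathcal{X}_{\mathcal{T}}\rvert+3))^2$. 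That moment estimate, not a truncation threshold, is the source of the constant in $\delta_{\mathcal{T}}(n)$, and it is the ingredient your proposal is missing.
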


Let $\epsilon>0$. By Lemma \ref{lems1}, there exists a subnormalized non-negative function $w_{\bar{X}^{n}_{\mathcal{L}} \bar{Y}^{n}}$ such that $\mathbb{V}(q_{\bar{X}^{n}_{\mathcal{L}} \bar{Y}^{n}},w_{\bar{X}^{n}_{\mathcal{L}} \bar{Y}^{n}})\leq\epsilon$ and
       \begin{align} \label{eqlmin}
           \forall \mathcal{T}\subseteq \mathcal{L}, H_{\infty}(w_{\bar{X}^{n}_{\mathcal{T}} \bar{Y}^{n}}|q_{\bar{Y}^{n}})\geq \bar{n} H({X_{\mathcal{T}}}|Y)-\bar{n} \delta_{\mathcal{T}}(\bar{n}),
       \end{align}
       where $\delta_{\mathcal{T}}(\bar{n})\triangleq (\log (\lvert\mathcal{X}_{\mathcal{T}}\rvert+3))\sqrt{\frac{2}{\bar{n}}(L+\log(\frac{1}{\epsilon}))}$. Then, we~have
\begin{align*}
&\mathbb{V} ( q_{K_{\mathcal{L}}F_{\mathcal{L}} \bar{Y}^n}, p_{U_{\mathcal K}} p_{U_{\mathcal F}}  q_{\bar{Y}^n})\\
& \stackrel{(a)}\leq \mathbb{V} ( q_{K_{\mathcal{L}}F_{\mathcal{L}} \bar{Y}^n}, w_{K_{\mathcal{L}}F_{\mathcal{L}} \bar{Y}^n}) + \mathbb{V} ( w_{K_{\mathcal{L}}F_{\mathcal{L}} \bar{Y}^n}, p_{U_{\mathcal K}} p_{U_{\mathcal F}}  q_{\bar{Y}^n}) \\
& \stackrel{(b)}\leq \epsilon + \mathbb{V} ( w_{K_{\mathcal{L}}F_{\mathcal{L}} \bar{Y}^n}, p_{U_{\mathcal K}} p_{U_{\mathcal F}}  q_{\bar{Y}^n}) \\
& \stackrel{(c)}\leq \epsilon + \mathbb{V} ( w_{K_{\mathcal{L}}F_{\mathcal{L}} \bar{Y}^n}, p_{U_{\mathcal K}} p_{U_{\mathcal F}}  w_{\bar{Y}^n}) \\
& \phantom{---}+ \mathbb{V} ( p_{U_{\mathcal K}} p_{U_{\mathcal F}}  w_{\bar{Y}^n}, p_{U_{\mathcal K}} p_{U_{\mathcal F}}  q_{\bar{Y}^n}) \\
& \stackrel{(d)}\leq 2 \epsilon + \mathbb{V} ( w_{K_{\mathcal{L}}F_{\mathcal{L}} \bar{Y}^n}, p_{U_{\mathcal K}} p_{U_{\mathcal F}}  w_{\bar{Y}^n}) \\
& \stackrel{(e)}\leq 2 \epsilon +    {{{\sqrt{{ \displaystyle\sum_{\substack{{\mathcal T \subseteq\mathcal L}, {\mathcal T \neq \emptyset}}}}2^{r_{\mathcal T}-H_{\infty}(w_{\bar{X}^{n}_{\mathcal{T}} \bar{Y}^{n}}|q_{\bar{Y}^{n}})}}}}}\\
& \stackrel{(f)}\leq 2 \epsilon +    {{{\sqrt{{ \displaystyle\sum_{\substack{{\mathcal T\subseteq\mathcal L}, {\mathcal T \neq \emptyset}}}}2^{r_{\mathcal T}-\bar{n} H({X_{\mathcal{T}}}|Y)+\bar{n} \delta_{\mathcal{T}}(\bar{n})}}}}}, \numberthis \label{eqVar}
\end{align*}
where
\begin{enumerate}
    \item[(a)] and (c) hold by the triangle inequality;
    \item[(b)] and (d) hold by the data processing inequality and because $\mathbb{V}(q_{\bar{X}^{n}_{\mathcal{L}} \bar{Y}^{n}},w_{\bar{X}^{n}_{\mathcal{L}} \bar{Y}^{n}})\leq\epsilon$;
    \item[(e)] holds by Lemma \ref{lemloh};
    \item[(f)] holds by~\eqref{eqlmin}.
\end{enumerate} 
Finally, we conclude by invoking the following lemma.
\begin{lem}[\cite{csiszar1996almost}] \label{lemvarmi}
Let $X$ and $Y$ be discrete random variables defined over $\mathcal{X}$ and $\mathcal{Y}$, respectively, with joint distribution $p_{XY}$. If $|\mathcal{X}|\geq 4$, then 
$$
 I(X;Y) \leq \mathbb{V}(p_{XY},p_Xp_Y) \log \frac{|\mathcal{X}|}{\mathbb{V}(p_{XY},p_Xp_Y)}.
$$
\end{lem}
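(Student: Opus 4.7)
The plan is to exploit the standard decomposition of mutual information into a $Y$-averaged entropy gap and combine it with a continuity bound for the Shannon entropy functional, finishing via Jensen's inequality applied to the concave function $t\mapsto t\log(|\mathcal{X}|/t)$.

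First I would write
\begin{align*}
I(X;Y)=H(X)-H(X|Y)=\sum_{y}p_Y(y)\bigl[H(p_X)-H(p_{X|Y=y})\bigr],
\end{align*}
which is legitimate since $p_{XY}$ and $p_Xp_Y$ share the same $Y$-marginal. Each summand is the difference of Shannon entropies of two probability measures on the $|\mathcal{X}|$-letter alphabet $\mathcal{X}$, so a continuity bound on $H$ applies termwise.

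Next I would invoke an Audenaert/Zhang–type continuity inequality for the Shannon entropy: for any $P,Q\in\mathcal{P}(\mathcal{X})$,
\begin{align*}
\lvert H(P)-H(Q)\rvert\leq \mathbb{V}(P,Q)\log(\lvert\mathcal{X}\rvert-1)+h(\mathbb{V}(P,Q)),
\end{align*}
where $h$ is the binary entropy. Under the hypothesis $|\mathcal{X}|\geq 4$, the residual $h(\mathbb{V})$ can be absorbed into the leading term so that the bound collapses to the clean form $\lvert H(P)-H(Q)\rvert\leq \mathbb{V}(P,Q)\log(|\mathcal{X}|/\mathbb{V}(P,Q))$, valid for $\mathbb{V}\in(0,1]$. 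Applying this to $P=p_X$, $Q=p_{X|Y=y}$ and summing against $p_Y$ gives
\begin{align*}
I(X;Y)\leq \sum_{y}p_Y(y)\,\mathbb{V}(p_X,p_{X|Y=y})\log\frac{|\mathcal{X}|}{\mathbb{V}(p_X,p_{X|Y=y})}.
\end{align*}

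To finish, I would push the average through the logarithm via Jensen. The function $t\mapsto t\log(|\mathcal{X}|/t)$ is concave on $(0,|\mathcal{X}|/e]$, an interval containing $(0,1]$ whenever $|\mathcal{X}|\geq 3$, so Jensen yields $I(X;Y)\leq \bar d\,\log(|\mathcal{X}|/\bar d)$ with $\bar d\triangleq \sum_y p_Y(y)\,\mathbb{V}(p_X,p_{X|Y=y})$. The closing identity is the direct calculation
\begin{align*}
\sum_{y}p_Y(y)\,\mathbb{V}(p_X,p_{X|Y=y})=\tfrac12\sum_{x,y}\lvert p_X(x)p_Y(y)-p_{XY}(x,y)\rvert=\mathbb{V}(p_{XY},p_Xp_Y),
\end{align*}
which yields the claim.

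The main obstacle is the continuity bound itself: one must verify that for $|\mathcal{X}|\geq 4$ the binary-entropy remainder $h(\mathbb{V})$ in the Audenaert/Zhang inequality can indeed be absorbed into $\mathbb{V}\log(|\mathcal{X}|/\mathbb{V})$ uniformly on $(0,1]$. Once that small piece of algebra is in place, the rest of the argument (decomposition, termwise continuity, Jensen, and identifying the averaged variational distance) is routine.
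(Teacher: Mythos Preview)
The paper does not prove this lemma; it is quoted from Csisz\'ar's 1996 paper without argument. So there is no ``paper's proof'' to compare against, and I can only evaluate your proposal on its own merits.

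The decomposition, the Jensen step, and the closing identity are all correct. The gap is exactly where you flagged it as ``the main obstacle'': the absorption step fails. The ``clean form'' pointwise bound
\[
\lvert H(P)-H(Q)\rvert\leq \mathbb{V}(P,Q)\log\frac{|\mathcal{X}|}{\mathbb{V}(P,Q)}
\]
is \emph{not} true for $|\mathcal{X}|=4$. Take $P=(1,0,0,0)$ and $Q=(1-3\epsilon,\epsilon,\epsilon,\epsilon)$, so $\mathbb{V}(P,Q)=3\epsilon$. Then $H(Q)=-(1-3\epsilon)\log(1-3\epsilon)-3\epsilon\log\epsilon$; for $\epsilon=0.01$ this gives $H(Q)\approx 0.242$ bits, whereas $3\epsilon\log(4/(3\epsilon))\approx 0.212$ bits. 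Equivalently, the algebraic claim you need is $(1-\mathbb{V})\log\frac{1}{1-\mathbb{V}}\leq \mathbb{V}\log\frac{|\mathcal{X}|}{|\mathcal{X}|-1}$, and the left derivative at $\mathbb{V}=0$ is $\log e\approx 1.443$ while the right derivative is $\log(4/3)\approx 0.415$, so the inequality is violated for all small $\mathbb{V}>0$.

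Nor can you rescue the argument by keeping Audenaert intact and applying Jensen to the concave $h$ afterwards: that route only yields $I(X;Y)\leq \bar d\log(|\mathcal{X}|-1)+h(\bar d)$, and the same computation shows this does not dominate $\bar d\log(|\mathcal{X}|/\bar d)$ for small $\bar d$. In short, the termwise entropy--continuity route with these constants cannot reach the stated inequality; Csisz\'ar's original argument proceeds differently, and you would need to consult it (or find a sharper pointwise bound that genuinely holds) rather than relying on the Audenaert/Zhang inequality.
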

By \eqref{eqVar} and Lemma \ref{lemvarmi}, concealment holds with $(r_{l})_{{\ell} \in\mathcal{L}}$ such that for any $\mathcal{T} \subseteq \mathcal{L}$, $ \lim_{n\to\infty} \frac{r_{\mathcal{T}}}{n} \leq  H(X_{\mathcal{T}}|Y) -\delta$, $\delta>0$.

\subsubsection{Bindingness}

We will use the following lemma from \cite{winter2003commitment}.

\begin{lem}[Adapted from {\cite{winter2003commitment}}] \label{lem1}
Let $\delta, \sigma>0$. Consider $x^n_{\mathcal{L}},\tilde{x}^n_{\mathcal{L}} \in \mathcal{X}_{\mathcal{L}}^n$ such that $d_H(x^n_{\mathcal{L}},\tilde{x}^n_{\mathcal{L}}) \geq \sigma n $ and  a non-redundant multiple access channel $W$. 
Then, 
$$
 \lim_{n\to\infty}  W^{\otimes n}_{x^n_{\mathcal{L}}}( \mathcal{T}^n_{W,\epsilon}(\tilde{x}_{\mathcal{L}}^n)) =0.
$$
\end{lem}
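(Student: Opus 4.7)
The plan is a contradiction argument against the non-redundancy hypothesis~(\ref{eqred}). First, I apply pigeonhole to the hypothesis $d_H(x^n_{\mathcal{L}},\tilde{x}^n_{\mathcal{L}})\geq \sigma n$: since there are at most $|\mathcal{X}_{\mathcal{L}}|$ possible values for $\tilde{x}_{\mathcal{L},i}$ at the mismatched positions, there exists $\tilde{x}^\star\in\mathcal{X}_{\mathcal{L}}$ such that the set $\mathcal{I}\triangleq \{i:\tilde{x}_{\mathcal{L},i}=\tilde{x}^\star,\, x_{\mathcal{L},i}\neq \tilde{x}^\star\}$ has size at least $\sigma n/|\mathcal{X}_{\mathcal{L}}|$. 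Writing $\mathcal{I}^\star\triangleq\{i:\tilde{x}_{\mathcal{L},i}=\tilde{x}^\star\}\supseteq \mathcal{I}$ and $\alpha_{x_{\mathcal{L}}}\triangleq |\mathcal{I}^\star|^{-1}|\{i\in\mathcal{I}^\star: x_{\mathcal{L},i}=x_{\mathcal{L}}\}|$, we have $1-\alpha_{\tilde{x}^\star}=|\mathcal{I}|/|\mathcal{I}^\star|\geq \sigma/|\mathcal{X}_{\mathcal{L}}|$.

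Next, conditioned on the deterministic input $x^n_{\mathcal{L}}$, the random variables $(Y_i)_{i\in\mathcal{I}^\star}$ are independent with $Y_i\sim W_{x_{\mathcal{L},i}}$. A Hoeffding inequality and a union bound over $y\in\mathcal{Y}$ then yield that, outside an event of probability at most $2|\mathcal{Y}|\exp(-2|\mathcal{I}^\star|\zeta^2)$, the empirical output distribution $\hat{p}^\star(y)\triangleq |\mathcal{I}^\star|^{-1}\sum_{i\in\mathcal{I}^\star}\mathds{1}\{Y_i=y\}$ is $|\mathcal{Y}|\zeta$-close in $\ell_1$-norm to its conditional mean $\sum_{x_{\mathcal{L}}}\alpha_{x_{\mathcal{L}}}W_{x_{\mathcal{L}}}$. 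Since $|\mathcal{I}^\star|\geq\sigma n/|\mathcal{X}_{\mathcal{L}}|$, this ``bad event'' has probability exponentially small in $n$.

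Complementarily, if $Y^n\in \mathcal{T}^n_{W,\epsilon}(\tilde{x}^n_{\mathcal{L}})$, then restricting~(\ref{eqdeftyp1}) to the slice $x_{\mathcal{L}}=\tilde{x}^\star$ and summing the coordinate-wise inequalities over $y$ gives $\Vert\hat{p}^\star- W_{\tilde{x}^\star}\Vert\leq |\mathcal{Y}|\epsilon\, n/|\mathcal{I}^\star|\leq |\mathcal{Y}||\mathcal{X}_{\mathcal{L}}|\epsilon/\sigma$. Therefore, on the high-probability Hoeffding event, the triangle inequality yields
\begin{align*}
\left\Vert (1-\alpha_{\tilde{x}^\star})W_{\tilde{x}^\star}-\sum_{x_{\mathcal{L}}\neq \tilde{x}^\star}\alpha_{x_{\mathcal{L}}}W_{x_{\mathcal{L}}}\right\Vert\leq |\mathcal{Y}|\zeta+\frac{|\mathcal{Y}||\mathcal{X}_{\mathcal{L}}|\epsilon}{\sigma}.
\end{align*}
Dividing by $1-\alpha_{\tilde{x}^\star}\geq \sigma/|\mathcal{X}_{\mathcal{L}}|$, this produces a distribution $p_{X_{\mathcal{L}}}$ with $p_{X_{\mathcal{L}}}(\tilde{x}^\star)=0$ such that $\Vert W_{\tilde{x}^\star}-W\circ p_{X_{\mathcal{L}}}\Vert$ is at most $(|\mathcal{Y}|\zeta+|\mathcal{Y}||\mathcal{X}_{\mathcal{L}}|\epsilon/\sigma)\cdot|\mathcal{X}_{\mathcal{L}}|/\sigma$, which for $\zeta$ and $\epsilon$ taken small enough compared to the non-redundancy gap $\eta$ of~(\ref{eqred}) is strictly smaller than $\eta$, a contradiction. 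Hence the typicality event can only occur within the exponentially rare Hoeffding event, so $W^{\otimes n}_{x^n_{\mathcal{L}}}(\mathcal{T}^n_{W,\epsilon}(\tilde{x}^n_{\mathcal{L}}))\to 0$.

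The main obstacle is the quantitative bookkeeping: the typicality slack $\epsilon$ and the Hoeffding slack $\zeta$ are both inflated by a factor $|\mathcal{X}_{\mathcal{L}}|/\sigma$ when we renormalize to a probability distribution supported away from $\tilde{x}^\star$, so $\epsilon$ must be chosen strictly smaller than roughly $\eta\sigma^2/(|\mathcal{Y}||\mathcal{X}_{\mathcal{L}}|^2)$ to close the argument -- which is fine because this $\epsilon$ can be selected freely in the bindingness proof that invokes this lemma. The exponential decay in $|\mathcal{I}^\star|\gtrsim n$ from Hoeffding comfortably absorbs all polynomial factors, so no delicate concentration analysis is needed beyond the standard type-counting argument.
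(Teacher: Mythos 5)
Your proof is correct. The paper does not actually spell out a proof of Lemma~\ref{lem1} --- it defers entirely to the cited two-party argument of Winter et al.\ and merely remarks that it survives non-product inputs --- and your self-contained reconstruction (pigeonhole onto the most frequent claimed symbol $\tilde{x}^\star$ among the mismatched positions, Hoeffding concentration of the empirical output distribution on the slice $\mathcal{I}^\star$ around the mixture $\sum_{x_{\mathcal{L}}}\alpha_{x_{\mathcal{L}}}W_{x_{\mathcal{L}}}$, and a contradiction with the non-redundancy gap after renormalizing to a distribution vanishing at $\tilde{x}^\star$) is exactly the standard route that reference takes, with the quantitative dependence of $\epsilon$ on $\eta$, $\sigma$, $|\mathcal{X}_{\mathcal{L}}|$, $|\mathcal{Y}|$ correctly tracked. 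The only point worth making explicit is that \eqref{eqred} is stated in the ``$\neq$'' form, so the uniform gap $\eta$ you invoke must first be extracted via the compactness argument of Proposition~\ref{prop1} applied to the MAC viewed as a DMC with input alphabet $\mathcal{X}_{\mathcal{L}}$; this is immediate but should be said.
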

A close inspection of the proof reveals that the lemma holds regardless of whether the joint distribution of the inputs is a product of the marginals or not.

In the reveal phase, if the verifier observes $\tilde{x}^n_{\mathcal{L}}$, then, by Lemma \ref{lem1}, a successful joint typicality test at the verifier requires $d_H(\tilde{x}^n_{\mathcal{L}}, {x}^n_{\mathcal{L}}) \leq {O(n^\alpha)}$, for some $\alpha<1$. This implies that Test~(ii) at the verifier in the reveal phase  can only succeed with a probability at most 
\begin{align*}
    2^{-Ln\eta}\sum_{i=0}^{d_H(\tilde{x}^n_{\mathcal{L}}, {x}^n_{\mathcal{L}})}{n \choose i}
    &\leq 2^{-Ln\eta} \cdot 2^{H_b(O(n^{\alpha-1}))}\\
    &\leq  2^{-Ln\eta} \cdot 2^{O(n^\alpha \log n)}\\
    & \xrightarrow{n\to \infty } 0,
\end{align*} 
where $H_b$ denotes the binary entropy.

\subsubsection{Achievable region and sum-rate}
For any   $p_{X_{\mathcal{L}}}\in \bar{\mathcal{P}}(\mathcal{X}_{\mathcal{L}})$, we have shown the achievability~of 
\begin{align*}
\mathcal{R} (p_{X_{\mathcal{L}}}) \triangleq  
\{ (R_{\ell})_{{\ell} \in\mathcal{L}} : R_{\mathcal{T}}  \leq  H(X_{\mathcal{T}}|Y) , \forall\mathcal{T}\subseteq\mathcal{L}\}.
\end{align*}
Next, define the set function
\begin{align*}
f_{p_{X_{\mathcal{L}}}} : 2^{\mathcal{L}} & \to \mathbb{R} \\
\mathcal{T} & \mapsto H(X_{\mathcal{T}}|Y),
\end{align*}
where $2^{\mathcal{L}}$ denotes the power set of $\mathcal{L}$. $f_{p_{X_{\mathcal{L}}}}$  is normalized, i.e., $f_{p_{X_{\mathcal{L}}}}(\emptyset)=0$, non-decreasing, i.e., $\forall \mathcal{S}, \mathcal{T} \subseteq \mathcal{L}$, $\mathcal{S}\subseteq \mathcal{T} \implies f_{p_{X_{\mathcal{L}}}}(\mathcal{S})\leq f_{p_{X_{\mathcal{L}}}}(\mathcal{T})$, and submodular because for $\mathcal{U},\mathcal{V} \subseteq \mathcal{L}$, we have
\begin{align*}
&f_{p_{X_{\mathcal{L}}}} (\mathcal{U} \cup \mathcal{V}) + f_{p_{X_{\mathcal{L}}}} (\mathcal{U} \cap \mathcal{V})\\
& =  H(X_{\mathcal{U}}|Y) +   H(X_{\mathcal{V} \backslash \mathcal{U}}|Y X_{\mathcal{U}}) + H(X_{\mathcal{U} \cap \mathcal{V}} | Y)  \\
& =  H(X_{\mathcal{U}}|Y) +   H(X_{\mathcal{V} \backslash \mathcal{U}}|Y X_{\mathcal{U}}) \\
&\phantom{--}+ H(X_{\mathcal{V}}|Y)-   H(X_{\mathcal{V} \backslash \mathcal{U}}|YX_{\mathcal{U} \cap \mathcal{V}} )\\
& \leq H(X_{\mathcal{U}}|Y) + H(X_{\mathcal{V}}|Y) \\
& = f_{p_{X_{\mathcal{L}}}} (\mathcal{U} ) + f_{p_{X_{\mathcal{L}}}} (  \mathcal{V}),
\end{align*}
where the inequality holds because $ H(X_{\mathcal{V} \backslash \mathcal{U}}|Y X_{\mathcal{U}}) \leq   H(X_{\mathcal{V} \backslash \mathcal{U}}|YX_{\mathcal{U} \cap \mathcal{V}} )$ since conditioning reduces entropy. 

Hence, by \cite{edmonds1970submodular}, the rate-tuple $(f_{p_{X_{\mathcal{L}}}} (\llbracket l,L \rrbracket) - f_{p_{X_{\mathcal{L}}}} (\llbracket l+1,L \rrbracket))_{{\ell} \in\mathcal{L}}$ is achievable and so is the sum-rate $f_{p_{X_{\mathcal{L}}}} (\llbracket 1,L \rrbracket)=H(X_{\mathcal{L}}|Y)$. Hence, the following sum-rate is achievable $$R_{\mathcal{L}} = \max_{p_{X_{\mathcal{L}} \in \bar{\mathcal{P}}(\mathcal{X}_{\mathcal{L}})}  } H(X_{\mathcal{L}}|Y).$$

\subsection{Converse} \label{secconv}
We provide here the converse proof for Theorem \ref{th2}. This proof, particularly Lemma \ref{lem}, draws on techniques from \cite{winter2003commitment,hayashi2022commitment}. The converse proof for the sum-rate in Theorem \ref{th1} follows in a similar manner. We note that we do not obtain a complete characterization of the capacity region when bidders collude, as Lemma \ref{lemmarkov} below is only valid for the non-colluding bidders case.

\begin{lem} \label{lemmarkov}
Consider the non-colluding bidders case. For $\ell \in \mathcal{L}$, $(A_{\ell},S_{\ell})-(M_{\ell},X_{\ell}^n)-(Y^n,S'_{\ell})$ forms a Markov chain.
\end{lem}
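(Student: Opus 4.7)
The plan is to prove this Markov chain via a d-separation argument on the causal DAG induced by the non-colluding protocol. First, I would lay out the DAG. Its root nodes are the mutually independent initial random variables $(A_\ell, S_\ell, S'_\ell)_{\ell \in \mathcal{L}}$ together with the per-time channel noise; the remaining nodes follow the protocol equations, so their parent sets are: $X_{\ell,i}$ has parents $(A_\ell, S_\ell, M'_{\ell,1:i-1,\cdot})$; each bidder-to-verifier message $M_{\ell,i,j}$ has parents $(A_\ell, S_\ell, M'_{\ell,1:i,1:j-1})$; each verifier-to-bidder reply $M'_{\ell,i,j}$ has parents $(S'_\ell, M_{\ell,1:i,1:j}, Y^i)$; and $Y_i$ has parents $X_{\mathcal{L},i}$ plus the time-$i$ channel noise. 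The non-colluding structure is essential here: it ensures that bidder $\ell$'s functions involve only $(A_\ell, S_\ell)$ and the replies $M'_\ell$ directed to bidder $\ell$, never any other bidder's randomness or replies, which is exactly what keeps the DAG's incoming edges for bidder $\ell$'s nodes small.

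Next, I would verify that $(M_\ell, X_\ell^n)$ d-separates $(A_\ell, S_\ell)$ from $(Y^n, S'_\ell)$. The only children of $A_\ell$ or $S_\ell$ are $\{X_{\ell,i}\}_i$ and the bidder-side components $\{M_{\ell,i,j}\}_{i,j}$ of $M_\ell$, so every directed path leaving $(A_\ell, S_\ell)$ hits the conditioning set at its first hop. If this first node is a chain on the path, the path is blocked; to extend, the node must be a collider. A collider occurs only when $X_{\ell,i}$ with $i \geq 2$ or $M_{\ell,i,j}$ with $j \geq 2$ has an additional incoming edge from some verifier reply $M'_{\ell,k,m}$. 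But $M'_{\ell,k,m}$ is itself part of $M_\ell$ and hence in the conditioning set; on the continuation of the path it appears as a chain node, with one outgoing edge to the collider and one incoming edge from one of its parents $S'_\ell$, $M_{\ell,\cdot}$, or $Y^k$. Thus the path is blocked at $M'_{\ell,k,m}$.

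The main subtlety is the protocol's feedback loop: bidder $\ell$'s future channel inputs depend on past verifier replies, which depend on past channel outputs, which in turn depend on past bidder $\ell$ inputs. This creates many candidate paths from $(A_\ell, S_\ell)$ back to $(Y^n, S'_\ell)$, but the key structural fact that kills all of them is that the conditioning set $M_\ell$ contains every verifier reply addressed to bidder $\ell$, so each feedback-induced path must traverse some $M'_{\ell,k,m}$ appearing as a chain node. An alternative proof would factorize $p(a_\ell, s_\ell, s'_\ell, m_\ell, x_\ell^n, y^n)$ directly using the independence of the initial randomness and the deterministic protocol equations and then marginalize over $(A_{\mathcal{L}\setminus\{\ell\}}, S_{\mathcal{L}\setminus\{\ell\}}, S'_{\mathcal{L}\setminus\{\ell\}})$ and the channel noise to exhibit the required conditional independence, but the d-separation route is considerably more transparent.
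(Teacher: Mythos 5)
Your d-separation argument is correct and constitutes a genuinely different route from the paper's proof. The paper proves the Markov chain directly, by a careful ``peeling'' of the interactive protocol: starting from $I(A_{\ell}S_{\ell};Y^n S'_{\ell} \mid \bar{M}_{\ell,1:n}X^n_{\ell})$, it repeatedly strips the latest verifier reply $M'_{\ell,n,r_n}$ (a function of variables already in the conditioning), then the latest bidder message $M_{\ell,n,r_n}$ (a function of $(A_\ell,S_\ell)$ and earlier replies, so it can be absorbed into the left argument at no cost), then the latest channel output $Y_n$ (using the memoryless channel), and finally the latest channel input $X_{\ell,n}$ — iterating $r_i$ times per channel use and $n$ times over channel uses until nothing is left but $I(A_\ell S_\ell ; S'_\ell)=0$. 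Your graphical argument encodes exactly the same functional dependencies in the DAG and appeals once to the soundness of d-separation for Bayesian networks, replacing the iterated inequalities with the observation that every undirected path out of $\{A_\ell,S_\ell\}$ first enters the conditioning set at a node in $\{X_{\ell,i}\}\cup\{M_{\ell,i,j}\}$, and any continuation past a collider there must next visit some $M'_{\ell,k,m}\in M_\ell$, at which point one of its two path-adjacent edges necessarily points outward, making it a chain or fork node in the conditioning set — blocked. You correctly flag where non-collusion enters (bidder $\ell$'s functions depend only on $(A_\ell,S_\ell,M'_{\ell,\cdot})$, so the only children of $(A_\ell,S_\ell)$ lie in the conditioning set), and you implicitly rely on the mutual independence of $(A_\ell,S_\ell,S'_\ell)$ and the channel noise, which the paper also uses (its last step is $I(A_\ell S_\ell;S'_\ell)=0$). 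The one cosmetic imprecision is saying $M'_{\ell,k,m}$ ``appears as a chain node'': it could also appear as a fork (two outgoing edges), but since forks in the conditioning set also block, the conclusion is unaffected. The trade-off is clear: the paper's peeling is elementary and self-contained but verbose, with repeated nested induction over rounds and channel uses; your d-separation proof is shorter and makes the protocol's causal structure and the role of the non-collusion assumption immediately visible, at the modest cost of invoking the d-separation theorem rather than deriving the independence from scratch.
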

\begin{proof}
For ${\ell} \in\mathcal{L}$, $i \in \llbracket 1 , n \rrbracket$, $j \in \llbracket 1 , r_i \rrbracket$, define $\bar{M}_{l,1:i,1:j} \triangleq (M_{l,1:i,1:j},M'_{l,1:i,1:j})$ and $\bar{M}_{l,1:i} \triangleq \bar{M}_{l,1:i,1:r_i}$.
We have
\begin{align*}
    &I(A_{\ell}S_{\ell};Y^n S'_{\ell} | \bar{M}_{l,1:n}X^n_{\ell}) \numberthis \label{eqr1}\\
    & = I(A_{\ell}S_{\ell};Y^n S'_{\ell} | \bar{M}_{l,1:n-1} \bar{M}_{l,n,1:r_n}X^n_{\ell}) \\
    & = I(A_{\ell}S_{\ell};Y^n S'_{\ell} | \bar{M}_{l,1:n-1} \bar{M}_{l,n,1:r_n-1} M_{l,n,r_n}M'_{l,n,r_n} X^n_{\ell}) \\
    & \leq I(A_{\ell}S_{\ell};Y^n S'_{\ell} M'_{l,n,r_n}| \bar{M}_{l,1:n-1} \bar{M}_{l,n,1:r_n-1} M_{l,n,r_n}  X^n_{\ell}) \\
    & \stackrel{(a)}= I(A_{\ell}S_{\ell};Y^n S'_{\ell} | \bar{M}_{l,1:n-1} \bar{M}_{l,n,1:r_n-1} M_{l,n,r_n}  X^n_{\ell}) \\
    & \leq I(A_{\ell}S_{\ell}M_{l,n,r_n};Y^n S'_{\ell} | \bar{M}_{l,1:n-1} \bar{M}_{l,n,1:r_n-1} X^n_{\ell}) \\
     & \stackrel{(b)} = I(A_{\ell}S_{\ell};Y^n S'_{\ell} | \bar{M}_{l,1:n-1} \bar{M}_{l,n,1:r_n-1} X^n_{\ell}) \numberthis \label{eqr2}\\
        & \stackrel{(c)}\leq I(A_{\ell}S_{\ell};Y^n S'_{\ell} | \bar{M}_{l,1:n-1}  X^n_{\ell}) \\
    & \stackrel{(d)}= I(A_{\ell}S_{\ell};Y^{n-1} S'_{\ell} | \bar{M}_{l,1:n-1}   X^n_{\ell}) \\
    & \leq I(A_{\ell}S_{\ell} (X_{l})_n ;Y^{n-1} S'_{\ell} | \bar{M}_{l,1:n-1}   X^{n-1}_{\ell})\displaybreak[0] \\
    & \stackrel{(e)} = I(A_{\ell}S_{\ell} ;Y^{n-1} S'_{\ell} | \bar{M}_{l,1:n-1}   X^{n-1}_{\ell}) \numberthis \label{eqr3}\displaybreak[0]\\
    & \stackrel{(f)}\leq I(A_{\ell}S_{\ell} ; S'_{\ell}  )\displaybreak[0]\\
    & = 0,
\end{align*}
where 
\begin{enumerate}[(a)]
    \item holds because $M'_{l,n,r_n}$ is a function of $(S'_{\ell}, M_{l,1:n,1:r_n},Y^n)$;
    \item holds because $M_{l,n,r_n}$ is a function of $(A_{\ell},S_{\ell}, M'_{l,1:n,1:r_n-1})$;
    \item holds by repeating $r_n-1$ times the steps between \eqref{eqr1} and \eqref{eqr2};
    \item holds because $Y_n - ( \bar{M}_{l,1:n-1},X^n, Y^{n-1} ,S'_{\ell})-(A_{\ell},S_{\ell})$ forms a Markov chain;
    \item holds because $(X_{l})_n$ is a function of $(A_{\ell},S_{\ell}, M'_{l,1:n-1,1:r_{n-1}})$;
    \item holds by repeating $n-1$ times the steps between \eqref{eqr1} and~\eqref{eqr3}.
\end{enumerate}
\end{proof}

\begin{lem} \label{lem}
For the non-colluding case,  there exist $\hat{A}_{l}(V_{\mathcal{L}},  X_{l}^n)$, $\ell \in \mathcal{L}$, such that $$ \lim_{n\to\infty}    \mathbb{P}[ \hat{A}_{l}(V_{\mathcal{L}},  X_{l}^n) \neq {A}_{l}]  = 0, \forall \ell \in \mathcal{L}.$$
\end{lem}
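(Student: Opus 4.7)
The plan is to construct, for each bidder $\ell$, an explicit decoder $\hat{A}_l(V_{\mathcal{L}}, X_l^n)$ based on the verifier's acceptance test and to argue via bindingness that this decoder errs with vanishing probability. Specifically, I would let $\hat{A}_l(V_{\mathcal{L}}, X_l^n)$ return any $a \in \mathcal{A}_l$ for which there exists $s \in \mathcal{R}_l$ such that $\beta_l(V_l, a, s) = 1$ and such that the encoding function applied to $(a,s)$ given the noiseless transcript contained in $V_l$ reproduces $X_l^n$, returning a default value if no such $a$ exists or if multiple do. By the correctness requirement of Definition~\ref{def2}, the true pair $(A_l, S_l)$ satisfies $\beta_l(V_l, A_l, S_l) = 1$ with probability tending to one and its encoding reproduces the actual channel input $X_l^n$, so $A_l$ lies in the accepting set with high probability.

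To establish that the accepting set contains no other message with high probability, I would invoke bindingness via an adversarial choice of randomness. Concretely, define $\tilde{S}_l^\star$ to be a function of $(V_l, X_l^n)$ that, whenever some $\tilde{a} \neq A_l$ also lies in the accepting set, returns the corresponding cheating witness $s$. By the bindingness condition in Definition~\ref{def2}, which must hold for any $\tilde{S}_l \in \mathcal{R}_l$ and in particular for this deterministic adversarial strategy, the probability that both $\beta_l(V_l, A_l, S_l) = 1$ and $\beta_l(V_l, \tilde{a}, \tilde{S}_l^\star) = 1$ hold vanishes with $n$. Combining this with the correctness estimate yields $\mathbb{P}[\hat{A}_l \neq A_l] \to 0$.

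The main obstacle is to interface Definition~\ref{def2}, which is phrased for a fixed pair of messages and a fixed alternative randomness, with uniqueness of decoding across the entire message set $\mathcal{A}_l$. This is precisely where Lemma~\ref{lemmarkov} becomes essential: the non-colluding Markov chain $(A_l, S_l) - (M_l, X_l^n) - (Y^n, S'_l)$ guarantees that $(Y^n, S'_l)$ carry no extra information about $(A_l, S_l)$ beyond what is contained in the bidder's view $(M_l, X_l^n)$, so the cheating strategy $\tilde{S}_l^\star$ can be taken as a function of $V_l$ and $X_l^n$ in a way that fits within the scope of the bindingness condition. This Markov structure is what makes the lemma specific to the non-colluding setting and explains why only the sum-rate converse is obtained in the colluding case, since no analogous per-bidder Markov chain is available when the bidders share randomness.
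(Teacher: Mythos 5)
Your decoder tests realized acceptance $\beta_\ell(V_\ell, a, s)=1$ over candidate pairs, so both the decoder and the constructed witness $\tilde S_\ell^\star$ are functions of $V_\ell$, which contains $(Y^n, S'_\ell)$. This is where the argument breaks down: the bindingness requirement of Definition~\ref{def2} constrains only cheating bidders, whose view at the time of revealing $(\tilde a_\ell, \tilde S_\ell)$ is $(A_\ell, S_\ell, M_\ell, M'_\ell, X_\ell^n)$ and excludes the verifier's private data $(Y^n, S'_\ell)$. A witness chosen as a function of the verifier's observations is not a legal cheating strategy, so bindingness does not bound $\mathbb{P}[\beta_\ell(V_\ell,A_\ell,S_\ell)=1=\beta_\ell(V_\ell,\tilde a,\tilde S_\ell^\star)]$ when $\tilde S_\ell^\star$ depends on $V_\ell$. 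Lemma~\ref{lemmarkov} does not rescue this: it states conditional independence of $(A_\ell,S_\ell)$ from $(Y^n,S'_\ell)$ given $(M_\ell,X_\ell^n)$, which says nothing about allowing a cheating bidder to select its reveal as a function of $(Y^n,S'_\ell)$. Nor can you union-bound over all fixed $(\tilde a,\tilde s)$, since $|\mathcal A_\ell|\cdot|\mathcal R_\ell|$ may grow exponentially in $n$ while the bindingness condition supplies no decay rate.

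The paper's proof avoids this by scoring candidates through conditional-expectation functionals rather than realized acceptance: it defines $F(x_\ell^n,m_\ell\mid a_\ell,s_\ell)=\mathbb{E}_{Y^nS'_\ell\mid M_\ell=m_\ell,X^n_\ell=x_\ell^n,S_\ell=s_\ell}[\mathds 1\{\beta_\ell(\cdot)\}]$, restricts $s_\ell$ to the high-acceptance set $\mathcal G(a_\ell)$, and decodes by $\hat a_\ell(x_\ell^n,m_\ell)\in\argmax_{a}\max_{s\in\mathcal G(a)}F(x_\ell^n,m_\ell\mid a,s)$. Because the decoder and the competing pair $(a_\ell^*,s_\ell^*)$ realizing $\bar F$ are measurable with respect to $(X_\ell^n,M_\ell)$ only, they \emph{do} constitute a valid reveal-time cheating strategy and bindingness applies directly in~\eqref{eqf2}. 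The role of Lemma~\ref{lemmarkov} is also narrower than you describe: it is invoked at step~(ii) of~\eqref{eq3} to justify that $\mathbb{E}_{Y^nS'_\ell\mid M_\ell,X^n_\ell,S_\ell}[\cdot]=\mathbb{E}_{Y^nS'_\ell\mid M_\ell,X^n_\ell,S_\ell,A_\ell}[\cdot]$, so that the $F$-functionals computed by the decoder match the true conditional acceptance probabilities under the actual committed message. To close the gap in your approach, you would need to restructure the decoder along these lines, replacing realized acceptance by expected acceptance over $(Y^n,S'_\ell)$ conditioned on $(X_\ell^n,M_\ell)$.
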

\begin{proof}
Let $\ell \in \mathcal{L}$ and $\delta,\gamma>0$.  We suppose that Bidder $\ell$ behaves honestly during the commit phase. Define for any ($a_{\ell}$, $s_{\ell}$, $x_{\ell}^n$, $m_{\ell}$)
\begin{align}
f(a_{\ell},s_{\ell}) &\triangleq \mathbb{E}_{Y^nM_{\ell}S'_{\ell}|A_{\ell} = a_{\ell}, S_{\ell}=s_{\ell}} [\mathds{1}\{ \beta_{\ell} ( Y^n\!,M_{\ell},S'_{\ell},a_{\ell},s_{\ell})\}],   \label{eq0}\\
\mathcal{G}(a_{\ell}) &\triangleq \{ s_{\ell} : f(a_{\ell},s_{\ell}) > 1- \gamma\}, \label{eq00}
\end{align}
\begin{align}
&F(x_{\ell}^n, m_{\ell} |a_{\ell}, s_{\ell}) \nonumber \\ &\triangleq \mathbb{E}_{Y^n S'_{\ell}| M_{\ell} = m_{\ell}, X^n_{\ell} = x^n_{\ell}, S_{\ell}=s_{\ell}} [\mathds{1}\{ \beta_{\ell} ( Y^n,m_{\ell},S'_{\ell},a_{\ell},s_{\ell})\}] , \label{eq1}
\end{align}
\begin{align}
F(x_{\ell}^n, m_{\ell} |a_{\ell})  &\triangleq  \max_{s_{\ell} \in \mathcal{G}(a_{\ell})} F(x_{\ell}^n, m_{\ell}|a_{\ell}, s_{\ell}),\label{eq2}\\
\bar{F}(x_{\ell}^n, m_{\ell} |a_{\ell})  &\triangleq  \max_{a_{\ell}^* \neq a_{\ell}} F(x_{\ell}^n, m_{\ell}|a_{\ell}^*),
\end{align}
and consider $$\hat{a}_{\ell}(x^n_{\ell},m_{\ell}) \in \argmax_{a_{\ell}} F(x^n_{\ell}, m_{\ell} | a_{\ell}).$$ 
Then,    we have
\begin{align*}
&\mathbb{P}[\hat{a}_{\ell}(X^n_{\ell},M_{\ell}) \neq a_{\ell}]\\
& = \mathbb{E}_{X_{\ell}^n M_{\ell}|A_{\ell} = a_{\ell}} [\mathds{1} \{ \hat{a}_{\ell}(X^n_{\ell},M_{\ell}) \neq a_{\ell} \} ] \\
& \leq \mathbb{E}_{X_{\ell}^n M_{\ell}|A_{\ell} = a_{\ell}} [  \bar{F}(X_{\ell}^n, M_{\ell} |a_{\ell})  + 1- {F}(X_{\ell}^n, M_{\ell} |a_{\ell}) ]\\
& = \mathbb{E}_{X_{\ell}^n M_{\ell}|A_{\ell} = a_{\ell}}   \bar{F}(X_{\ell}^n, M_{\ell} |a_{\ell}) \\
& \phantom{--}+ \mathbb{E}_{X_{\ell}^n M_{\ell}|A_{\ell} = a_{\ell}}[1- {F}(X_{\ell}^n, M_{\ell} |a_{\ell}) ], \numberthis \label{eqF}
\end{align*}
where the inequality holds because if $\hat{a}_{\ell}(x^n_{\ell},m_{\ell}) \neq a_{\ell}$, then ${F}(x_{\ell}^n, m_{\ell} |a_{\ell}) \leq \max_{a_{\ell}^* \neq a_{\ell}} F(x^n_{\ell}, m_{\ell} | a_{\ell}^*) = \bar{F}(x_{\ell}^n, m_{\ell} |a_{\ell})$, so~that
$
\mathds{1} \{ \hat{a}_{\ell}(x^n_{\ell},m_{\ell}) \neq a_{\ell} \} 
 \leq \bar{F}(x_{\ell}^n, m_{\ell} |a_{\ell})  + 1- {F}(x_{\ell}^n, m_{\ell} |a_{\ell})$. 
 
 We next upper bound the second term in the right-hand side of \eqref{eqF}. We have
\begin{align*}
& \mathbb{E}_{X_{\ell}^n M_{\ell} | A_{\ell} = a_{\ell} }  F(X_{\ell}^n, M_{\ell}|a_{\ell}) \\
& \stackrel{(a)} = \mathbb{E}_{X_{\ell}^n M_{\ell} S_{\ell} | A_{\ell} = a_{\ell} }  F(X_{\ell}^n, M_{\ell}|a_{\ell}) \\
&\geq \textstyle\sum_{s_{\ell} \in \mathcal{G}(a_{\ell})} p_{S_{\ell} | A_{\ell} = a_{\ell}} (s_{\ell}) \\
& \phantom{----}\times\mathbb{E}_{X_{\ell}^n M_{\ell} | A_{\ell} = a_{\ell} , S_{\ell} = s_{\ell}}  F(X_{\ell}^n, M_{\ell}|a_{\ell}) \\
&\stackrel{(b)} \geq \textstyle\sum_{s_{\ell} \in \mathcal{G}(a_{\ell})} p_{S_{\ell} | A_{\ell} = a_{\ell}} (s_{\ell}) \\
& \phantom{----}\times\mathbb{E}_{X_{\ell}^n M_{\ell} | A_{\ell} = a_{\ell} , S_{\ell} = s_{\ell}}  F(X_{\ell}^n, M_{\ell}|a_{\ell},s_{\ell}) \\
   & \stackrel{(c)}= \textstyle\sum_{s_{\ell} \in \mathcal{G}(a_{\ell})} p_{S_{\ell} | A_{\ell} = a_{\ell}} (s_{\ell})  \\
& \phantom{----}\times\mathbb{E}_{  Y^n S'_{\ell} M_{\ell} |  S_{\ell}=s_{\ell}, A_{\ell} =a_{\ell}} [\mathds{1}\{ \beta ( Y^n,M_{\ell}, S'_{\ell},a_{\ell},s_{\ell})\}] \\
&\stackrel{(d)}= \textstyle\sum_{s_{\ell} \in \mathcal{G}(a_{\ell})} p_{S_{\ell} | A_{\ell} = a_{\ell}} (s_{\ell})f(a_{\ell},s_{\ell}) \\
&\stackrel{(e)}> \textstyle\sum_{s_{\ell} \in \mathcal{G}(a_{\ell})} p_{S_{\ell} | A_{\ell} = a_{\ell}} (s_{\ell}) (1-\gamma) \\
&\stackrel{(f)}\geq (1 - \delta \gamma^{-1})
 (1-\gamma), \numberthis \label{eqf1}
\end{align*}
where 
\begin{enumerate}[(a)]
    \item holds by marginalization;
    \item  holds by \eqref{eq2};
    \item holds because, for any ($a_{\ell}^{\dagger}$,  $s_{\ell}^{\dagger}$), we have
\begin{align*}
&  \!\!\!\!\! \mathbb{E}_{X_{\ell}^n M_{\ell} | A_{\ell} = a_{\ell}, S_{\ell}=s_{\ell}} F(X_{\ell}^n, M_{\ell}|a_{\ell}^{\dagger}, s_{\ell}^{\dagger})\\
 & \!\!\!\!\! \stackrel{(i)}= \mathbb{E}_{X_{\ell}^n M_{\ell} | A_{\ell} = a_{\ell}, S_{\ell}=s_{\ell}}\mathbb{E}_{Y^n S'_{\ell}|M_{\ell} = M_{\ell}, X^n_{\ell} = X^n_{\ell}, S_{\ell}=s_{\ell}}\\
 &\phantom{-------l-----} [\mathds{1}\{ \beta_{\ell} ( Y^n,M_{\ell},S'_{\ell},a_{\ell}^{\dagger},s_{\ell}^{\dagger})\}] \\
  & \!\!\!\!\!\!\stackrel{(ii)}= \mathbb{E}_{X_{\ell}^n M_{\ell} | A_{\ell} = a_{\ell}, S_{\ell}=s_{\ell}}\mathbb{E}_{Y^n S'_{\ell}|M_{\ell} = M_{\ell}, X^n_{\ell} = X^n_{\ell}, S_{\ell}=s_{\ell}, A_{\ell} =a_{\ell}} \\
 &\phantom{-------l-----}[\mathds{1}\{ \beta_l ( Y^n,M_{\ell},S'_{\ell},a_{\ell}^{\dagger},s_{\ell}^{\dagger})\}] \\
    & \!\!\!\!\!=   \mathbb{E}_{X_{\ell}^n  Y^n S'_{\ell} M_{\ell} |  S_{\ell}=s_{\ell}, A_{\ell} =a_{\ell}} [\mathds{1}\{ \beta_l ( Y^n,M_{\ell},S'_{\ell},a_{\ell}^{\dagger},s_{\ell}^{\dagger})\}]\\
        & \!\!\!\!\!\!\!\stackrel{(iii)}=   \mathbb{E}_{  Y^n S'_{\ell} M_{\ell} |  S_{\ell}=s_{\ell}, A_{\ell} =a_{\ell}} [\mathds{1}\{ \beta_l ( Y^n,M_{\ell}, S'_{\ell},a_{\ell}^{\dagger},s_{\ell}^{\dagger})\}], \numberthis \label{eq3}
\end{align*}
where 
\begin{enumerate}[(i)]
    \item holds by \eqref{eq1};
    \item holds because  $A_{\ell}- (M_{\ell} , X^n_{\ell} , S_{\ell})-( Y^n ,S'_{\ell}) $ forms a Markov chain since $I(A_{\ell};  Y^n S'_{\ell}|M_{\ell}  X^n_{\ell}  S_{\ell}  ) \leq I(A_{\ell} S_{\ell};  Y^n S'_{\ell}|M_{\ell}  X^n_{\ell}    ) =0$ by Lemma~ \ref{lemmarkov};
    \item  holds by marginalization over $X^n_{\ell}$.
\end{enumerate}
    \item holds by \eqref{eq0};
    \item holds by \eqref{eq00};
    \item holds because \begin{align*}
\mathbb{P}[\mathcal{G}(a_{\ell})]
& = \mathbb{P}[f(a_{\ell},S_{\ell})> 1- \gamma ]\\
& = 1 - \mathbb{P}[1- f(a_{\ell},S_{\ell}) \geq  \gamma ]\\
& \stackrel{(i)} \geq 1 - \frac{\mathbb{E}_{S_{\ell}|A_{\ell} = a_{\ell}}[1- f(a_{\ell},S_{\ell}) ]}{\gamma} \\
& \stackrel{(ii)} \geq 1 - \delta \gamma^{-1},  
\end{align*}
where 
\begin{enumerate}[(i)]
    \item holds by Markov's inequality;
    \item  holds because by the correctness condition, for any $a_{\ell}$ and for $n$ large enough, we have $
\mathbb{E}_{S_{\ell}|A_{\ell} = a_{\ell}} f(a_{\ell},S_{\ell})  \geq 1 -\delta.$
\end{enumerate}
\end{enumerate}

 We now upper bound the first term in the right-hand side of \eqref{eqF}.  We have
\begin{align*}
& \mathbb{E}_{X_{\ell}^n  M_{\ell} |   A_{\ell} =a_{\ell}} [\bar{F}(X_{\ell}^n, M_{\ell} |a_{\ell})]\\
& \stackrel{(a)} =\mathbb{E}_{X_{\ell}^n  M_{\ell} |   A_{\ell} =a_{\ell}} [{F}(X_{\ell}^n, M_{\ell} |a_{\ell}^*,s_{\ell}^*)] \\
&  \stackrel{(b)}= \mathbb{E}_{S_{\ell} |   A_{\ell} =a_{\ell}} \mathbb{E}_{X_{\ell}^n  M_{\ell} |   A_{\ell} =a_{\ell}, S_{\ell}=S_{\ell}} [{F}(X_{\ell}^n, M_{\ell} |a_{\ell}^*,s_{\ell}^*)] \\
&  \stackrel{(c)}= \mathbb{E}_{S_{\ell} |   A_{\ell} =a_{\ell}} \mathbb{E}_{  Y^n S'_{\ell} M_{\ell} |  S_{\ell}=S_{\ell}, A_{\ell} =a_{\ell}} [\mathds{1}\{ \beta_l ( Y^n\!\!,M_{\ell}, S'_{\ell},a_{\ell}^*,s_{\ell}^*)\!\}] \\
&  \stackrel{(d)}\leq \delta,  \numberthis \label{eqf2}
\end{align*}
where 
\begin{enumerate}[(a)]
    \item holds with $(a_{\ell}^*, s_{\ell}^*)$ such that $\bar{F}(X_{\ell}^n, M_{\ell} |a_{\ell}) = {F}(X_{\ell}^n, M_{\ell} |a_{\ell}^*) = {F}(X_{\ell}^n, M_{\ell} |a_{\ell}^*,s_{\ell}^*)$; 
    \item holds by marginalization over $S_{\ell}$; 
    \item holds by \eqref{eq3};
    \item holds by the bindingness condition. 
\end{enumerate}
Finally, by \eqref{eqF}, \eqref{eqf1}, and \eqref{eqf2}, we have
$$ \mathbb{P}[\hat{a}_{\ell}(X^n_{\ell},M_{\ell}) \neq a_{\ell}]\xrightarrow{n\to\infty} 0.$$
Since $M_{\ell}$ is part of $V_{\mathcal{L}}$,  the lemma follows.
\end{proof}

Finally, for $U$ uniformly distributed over $\llbracket 1 ,n \rrbracket$ and independent of all other random variables, for any $\mathcal{T} \subseteq \mathcal{L}$, we~have
\begin{align*}
 n H(X_{\mathcal{T}} |Y ) 
& = n H(X_{\mathcal{T},U} |Y_U ) \\
& \stackrel{(a)}\geq n H(X_{\mathcal{T},U} |Y_U U) \\
& = \sum_{i=1}^n H(X_{\mathcal{T},i} |Y_i) \\
& \stackrel{(b)}\geq \sum_{i=1}^n H(X_{\mathcal{T},i} |Y^n  X^{i-1}_{\mathcal{T}}) \\
&  \stackrel{(c)}=   H(X^n_{\mathcal{T}} |Y^n ) \\
&  \stackrel{(d)}\geq   H(X^n_{\mathcal{T}} |V_{\mathcal{L}} ) \\
& =    H(X^n_{\mathcal{T}} {A}_{\mathcal{T}} |V_{\mathcal{L}}) - H({A}_{\mathcal{T}} | X^n_{\mathcal{T}} V_{\mathcal{L}} )\\
&  \stackrel{(e)}\geq     H(X^n_{\mathcal{T}} {A}_{\mathcal{T}} |V_{\mathcal{L}}) - H({A}_{\mathcal{T}} | \hat{A}_{\mathcal{T}} )\\
& \stackrel{(f)}\geq   H( {A}_{\mathcal{T}} |V_{\mathcal{L}} ) - o(n)\\
& =  H( {A}_{\mathcal{T}}) - I( {A}_{\mathcal{T}};V_{\mathcal{L}} ) - o(n)\\
& \stackrel{(g)}\geq   H( {A}_{\mathcal{T}})  - o(n)\\
& =   nR_{\mathcal{T}}  - o(n), \numberthis \label{eqconvrate}
\end{align*}
where 
\begin{enumerate}
    \item[(a)]\!\!, (b), and (d) hold because conditioning reduces entropy;
    \item[(c)] holds by the chain rule;
    \item[(e)] holds with $\hat{A}_{\mathcal{T}} \triangleq (\hat{A}_{l})_{{\ell} \in\mathcal{T}}$ from Lemma \ref{lem} and the data processing inequality; \item[(f)]~holds by Lemma \ref{lem} and Fano's inequality; \item[(g)] holds by the concealment requirement. 
    \end{enumerate}

\section{Proof of Theorem~\ref{thm3}}
\label{sec:proof-theorem-single-bid-multi-ver}

\subsection{Achievability} \label{secach2}

Fix $p_{X}\in {\mathcal{P}}(\mathcal{X})$. Define $q_{XY_{\mathcal{B}}} \triangleq  p_{X } p_{Y_{\mathcal{B}}|X}$. Consider $X^n$ distributed according to $p^{\otimes n}_{X}$.

\noindent{}\textbf{Commit Phase}: The bidder commits to $a$ as follows.
\begin{itemize}
\item The bidder sends  the sequence $X^n$ over the channel $W$. Verifier $b\in\mathcal{B}$ observes $Y_b^n$.
\item Verifier $b\in\mathcal{B}$ chooses a function $G_{b}:\mathcal{X}^{n\mu} \to \{0,1\}^{\eta n} $ at random in a family of two-universal hash functions with $\mu>\eta >0$, and sends $G_{b}$ to the bidder  over the noiseless channel.
\item The bidder selects a set $\mathcal{S} \subset \llbracket 1 , n \rrbracket$ with size $|\mathcal{S}| = \mu n$ uniformly at random and then sends $G_{b}(X^n[\mathcal{S}])$ and $\mathcal{S}$   to Verifier $b \in \mathcal{B}$ over the noiseless channel. Let $T_{b}$ be the corresponding sequence observed by Verifier~$b$.
\item The bidder chooses  a function $F : \mathcal{X}^{\bar{n}} \to \{0,1\}^{r}$ at random in a family of two-universal hash functions, and sends $F$ and $E \triangleq a \oplus F(\bar{X}^n)$ over the noiseless channel, where $\bar{X}^n \triangleq {X}^n[\mathcal{S}^c]$ and $\bar{n} \triangleq n- |  \mathcal{S}  |$.
\end{itemize}
\textbf{Reveal Phase}: Suppose that $\mathcal{A} \subset \mathcal{B}$ is the set of available verifiers with $|\mathcal{A}|\geq 1$. The bidder chooses an arbitrary index $b^{\star} \in \mathcal{B}$ and the bidder  reveals $a$ as follows.
\begin{itemize}
\item The bidder sends $X^n$ and $a$ to the verifier over the noiseless channel.
\item  The verifier tests that
\begin{enumerate}[(i)]
\item $(X^n,Y_{b^{\star}}^n)\in \mathcal{T}_{\epsilon}^n(q_{XY_{b^{\star}}})$;
\item $T_{b^{\star}} = G_{b^{\star}}(X^n[\mathcal{S}])$;
\item $a = E \oplus F(\bar{X}^n)$;
\end{enumerate}
and outputs 1 if all conditions are satisfied, and 0 else.
\end{itemize}

Because of the similarity with the achievability proof of Theorem \ref{th1}, we only highlight the main steps of the proof.

\subsubsection{Correctness}
When the parties are not cheating, standard typicality arguments~\cite{Kramer2008Topics} show that, for any $b\in  \mathcal{B}$,  $ \lim_{n\to \infty} \mathbb{P}[(X^n,Y_{b})\in \mathcal{T}_{\epsilon}^n(q_{X Y_{b}})]=1$. Consequently, part (i) of the reveal phase test passes, while part (ii) and (iii) are automatically true, so that the verifier estimates $a$ with vanishing probability of error in the reveal phase in the absence of cheating.

\subsubsection{Concealment}
Let $b \in \mathcal{B}$. Define $V_{b}' \triangleq (\mathcal{S},G_{\mathcal{B}} , F , T_{\mathcal{B}},Y_{b}^n)$, where $G_{\mathcal{B}} \triangleq (G_{b})_{b\in\mathcal{B}}$ and $T_{\mathcal{B}} \triangleq (T_{b})_{b\in\mathcal{B}}$, and $V_{b} \triangleq (V_{b}' ,E )$.
Also define $K   \triangleq F (\bar{X}^n ) $ and $\bar{Y}_{b}^n \triangleq Y_{b}^n[ \mathcal{S}^c]$. Then, we have
\begin{align}
I(A;V_{b}) \leq  r  -H( K ) + I( K ;F  \bar{Y}_{b}^n) ,  \label{eqconceal2}
\end{align}
where \eqref{eqconceal2} can be shown similar to  \eqref{eqconceal}.
Next, we upper bound the right-hand side of \eqref{eqconceal2} using  Lemmas~\ref{lemloh} and \ref{lems1}.

Let $\epsilon>0$. By Lemma \ref{lems1}, there exists a subnormalized non-negative function $w_{\bar{X}^{n} \bar{Y}_{b}^{n}}$ such that $\mathbb{V}(q_{\bar{X}^{n} \bar{Y}_{b}^{n}},w_{\bar{X}^{n} \bar{Y}_{b}^{n}})\leq\epsilon$ and
       \begin{align} \label{eqlmin2}
           H_{\infty}(w_{\bar{X}^{n} \bar{Y}_{b}^{n}}|q_{\bar{Y}_{b}^{n}})\geq \bar{n} H(X|Y_{b})-\bar{n} \delta(\bar{n}),
       \end{align}
       where $\delta(\bar{n})\triangleq (\log (\lvert\mathcal{X}\rvert+3))\sqrt{\frac{2}{\bar{n}}(1+\log(\frac{1}{\epsilon}))}$. Then, similar to \eqref{eqVar},  using Lemma \ref{lemloh}, one can show that, for any $b\in\mathcal{B}$,
\begin{align}
&\mathbb{V} ( q_{KF \bar{Y}_{b}^n}, p_{U_{\mathcal K}} p_{U_{\mathcal F}}  q_{\bar{Y}_{b}^n}) \nonumber\\
&\leq 2 \epsilon +    \sqrt{2^{r-\bar{n} H(X|Y_{b})+\bar{n} \delta(\bar{n})}} \nonumber\\
&\leq 2 \epsilon +    \sqrt{2^{r-\bar{n} \min_{b\in \mathcal{B}}H(X|Y_{b})+\bar{n} \delta(\bar{n})}}. \label{eqvar2}
\end{align}
From \eqref{eqvar2} and Lemma \ref{lemvarmi}, we conclude that the concealment requirement \eqref{eqconcealbr} holds with $r$ such that $ \lim_{n\to\infty} \frac{r}{n} \leq \min_{b\in \mathcal{B}} H(X|Y_{b}) - \delta$, $\delta>0$.

\subsubsection{Bindingness}
In the reveal phase, if the verifier observes $\tilde{x}^n$, then, by Lemma \ref{lem1}, a successful joint typicality test at the verifier requires $d_H(\tilde{x}^n, {x}^n) \leq {O(n^\alpha)}$, for some $\alpha<1$. 
This implies that Test~(ii) at the verifier in the reveal phase  can only succeed with a probability at most $2^{-n\eta}\cdot 2^{O(n^\alpha \log n)}$, which vanishes to zero as $n\to \infty$.

\subsection{Converse} \label{appB}
Again, because of the similarity with the converse proof of Theorem~\ref{th2}, we again only highlight the main steps of the proof.

We begin by proving Lemma \ref{lemmarkov2}, which is a counterpart to Lemma \ref{lemmarkov}. Although the proofs of these two lemmas are similar, we provide a detailed explanation here to clarify how the same technique can be applied in a multi-verifier setting, with some modifications based on the definitions in Section~\ref{sec:single-bidder-multi}.

\begin{lem} \label{lemmarkov2}
 Fix   $b \in \mathcal{B}$. Then, for  $(A,S)-(M,X^n)-(Y_{b}^n,S'_{b})$ forms a Markov chain.

\end{lem}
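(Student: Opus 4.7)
The plan is to mimic the peeling argument used in the proof of Lemma \ref{lemmarkov}, with careful bookkeeping to accommodate the multi-verifier structure. Concretely, I would show that $I(AS; Y_b^n S'_b \mid \bar{M}_{1:n} X^n)=0$, where, extending the notation from Lemma \ref{lemmarkov}, $\bar{M}_{b',1:i,1:j}\triangleq (M_{b',1:i,1:j}, M'_{b',1:i,1:j})$, $\bar{M}_{b',1:i}\triangleq \bar{M}_{b',1:i,1:r_i}$, and $\bar{M}_{1:i} \triangleq (\bar{M}_{b',1:i})_{b'\in\mathcal{B}}$ collects all the noiseless communication between the bidder and every verifier throughout the commit phase.

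The argument proceeds in two nested stages. First, I would peel off the rounds of noiseless communication between the bidder and verifier $b$, one round at a time, starting from round $r_n$ at time $n$ and working backwards to round $1$ at time $1$. At each round, I would first enlarge the conditional by $M'_{b,i,j}$ (which can only increase mutual information), then remove it using the functional relation $M'_{b,i,j}=\phi(S'_b, M_{b,1:i,1:j}, Y_b^i)$, and subsequently enlarge by and remove $M_{b,i,j}$ using $M_{b,i,j}=\psi(A,S,M'_{b,1:i,1:j-1})$. The corresponding rounds $(\bar{M}_{b',n,r_n})_{b'\neq b}$ for the other verifiers remain in the conditional throughout this first stage; they are inert since they do not appear in the functional relations used to strip verifier $b$'s messages. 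Once all of verifier $b$'s rounds at time $i=n$ have been removed, I would peel off $Y_{b,n}$ using the memoryless broadcast channel, namely the Markov chain $Y_{b,n}-(\bar{M}_{1:n-1}, X^n, Y_b^{n-1}, S'_b)-(A,S)$, followed by removing $X_n$ using the functional relation $X_n=X_n(a,S,(M'_{b',1:n-1,1:r_{n-1}})_{b'\in\mathcal{B}})$ from the definition of the scheme. Iterating across $i=n-1,n-2,\ldots,1$ reduces the quantity to $I(AS; S'_b)$, which vanishes by independence of the verifier's local randomness.

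The main obstacle will be handling the messages $\bar{M}_{b'}$ with the other verifiers $b'\neq b$, which are carried along throughout the conditioning. Unlike in the single-verifier case of Lemma \ref{lemmarkov}, these cross-party messages could a priori couple $(A,S)$ with $(Y_b^n, S'_b)$. The argument relies on the facts that (i) the bidder-to-verifier-$b'$ messages $M_{b',i,j}$ are functions of $(A,S,M'_{b',1:i,1:j-1})$ only (not of other verifiers' histories), and (ii) $M'_{b',i,j}$ is a function of $(S'_{b'},M_{b',1:i,1:j},Y_{b'}^i)$, so that conditional on $X^n$ the whole stack $\bar{M}_{-b}\triangleq (\bar{M}_{b'})_{b'\neq b}$ is determined by $(A,S,(S'_{b'},Y_{b'}^n)_{b'\neq b})$, which is conditionally independent of $(Y_b^n, S'_b)$ given $X^n$ because the broadcast channel is memoryless and the verifiers' randomnesses are mutually independent. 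Making this independence explicit at the outset — for instance, by splitting the mutual information and isolating $\bar{M}_{-b}$ from the peeling — ensures that keeping $\bar{M}_{-b}$ in the conditional does not disrupt the induction and does not introduce additional dependence between $(A,S)$ and $(Y_b^n, S'_b)$.
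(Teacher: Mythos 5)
There is a genuine gap in your plan, and it sits exactly where you located the ``main obstacle.'' Your argument hinges on the claim that $(A,S,(S'_{b'},Y_{b'}^n)_{b'\neq b})$ is conditionally independent of $(Y_b^n,S'_b)$ given $X^n$ ``because the broadcast channel is memoryless and the verifiers' randomnesses are mutually independent.'' Memorylessness is a statement across time, not across receivers: the model only specifies a joint transition kernel $p_{Y_{\mathcal{B}}|X}$, so the outputs $Y_{b,i}$ and $Y_{b',i}$ may be arbitrarily correlated given $X_i$ (no physical degradation or conditional independence of the outputs is assumed). On top of that, because the protocol is interactive, $X_{i+1}$ depends on $Y_{\mathcal{B}}^{i}$ through the verifiers' replies, so conditioning on the \emph{entire} sequence $X^n$ does not render the outputs at time $i$ independent of anything else either. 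As a consequence, carrying $\bar{M}_{-b}$ inertly in the conditioning is not safe: $M'_{b',i,j}$ is a function of both $Y_{b'}^i$ (correlated with $Y_b^i$ given the inputs) and $M_{b',i,j}$ (a function of $(A,S)$), so conditioning on it is a collider conditioning that can create dependence between $(A,S)$ and $Y_b^n$. In particular, the Markov chain $Y_{b,n}-(\bar{M}_{1:n-1},\bar{M}_{-b,n},X^n,Y_b^{n-1},S'_b)-(A,S)$ that your peeling step implicitly requires does not follow from the model.

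The paper sidesteps this entirely with one extra inequality at the outset: it upper-bounds $I(AS;Y_b^nS'_b\mid \bar{M}_{1:n}X^n)$ by $I(AS;Y_{\mathcal{B}}^nS'_{\mathcal{B}}\mid \bar{M}_{1:n}X^n)$, i.e., it enlarges the target to include \emph{all} verifiers' outputs and local randomness. After this enlargement, every reply $M'_{b',i,j}$ is a deterministic function of variables already present in the conditioning and the target, so all verifiers' rounds can be stripped in lockstep, and the channel step uses the (valid, purely temporal) Markov chain $Y_{\mathcal{B},n}-(\bar{M}_{1:n-1},X^n,Y_{\mathcal{B}}^{n-1},S'_{\mathcal{B}})-(A,S)$. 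If you want to salvage your version, you should adopt this enlargement rather than trying to argue that the cross-verifier transcripts are harmless; without an additional structural assumption on $p_{Y_{\mathcal{B}}|X}$, they are not.
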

\begin{proof}

For   $i \in \llbracket 1 , n \rrbracket$, $j \in \llbracket 1 , r_i \rrbracket$, define $M_{1:i,1:j} \triangleq (M_{b,1:i,1:j})_{b\in\mathcal{B}}$, $M'_{1:i,1:j} \triangleq (M'_{b,1:i,1:j})_{b\in\mathcal{B}}$, $\bar{M}_{1:i,1:j} \triangleq (M_{1:i,1:j},M'_{1:i,1:j})$, $\bar{M}_{1:i} \triangleq \bar{M}_{1:i,1:r_i}$, $b\in\mathcal{B}$.
We have
\begin{align*}
&I(A S ;Y_b^n S'_b | \bar{M}_{1:n}X^n) \\
    & \leq I(A S ;Y_{\mathcal{B}}^n S'_{\mathcal{B}} | \bar{M}_{1:n}X^n) \numberthis \label{eqr12}\\
    & = I(AS;Y_{\mathcal{B}}^n S'_{\mathcal{B}} | \bar{M}_{1:n-1} \bar{M}_{n,1:r_n-1} M_{n,r_n}M'_{n,r_n} X^n) \\
    & \leq I(AS;Y_{\mathcal{B}}^n S'_{\mathcal{B}} M'_{n,r_n}| \bar{M}_{1:n-1} \bar{M}_{n,1:r_n-1} M_{n,r_n}  X^n) \\
    & \stackrel{(a)}= I(AS;Y_{\mathcal{B}}^n S'_{\mathcal{B}} | \bar{M}_{1:n-1} \bar{M}_{n,1:r_n-1} M_{n,r_n}  X^n) \\
    & \leq I(ASM_{n,r_n};Y_{\mathcal{B}}^n S'_{\mathcal{B}}| \bar{M}_{1:n-1} \bar{M}_{n,1:r_n-1} X^n) \\
     & \stackrel{(b)} = I(AS;Y_{\mathcal{B}}^n S'_{\mathcal{B}}| \bar{M}_{1:n-1} \bar{M}_{n,1:r_n-1} X^n) \numberthis \label{eqr22}\\
        & \stackrel{(c)}\leq I(AS;Y_{\mathcal{B}}^n S'_{\mathcal{B}} | \bar{M}_{1:n-1}  X^n) \\
    & \stackrel{(d)}= I(AS;Y_{\mathcal{B}}^{n-1} S'_{\mathcal{B}} | \bar{M}_{1:n-1}   X^n) \\
    & \leq I(AS X_n ;Y_{\mathcal{B}} ^{n-1} S'_{\mathcal{B}}  | \bar{M}_{1:n-1}   X^{n-1}) \\
    & \stackrel{(e)} = I(AS ;Y_{\mathcal{B}}^{n-1} S'_{\mathcal{B}} | \bar{M}_{1:n-1}   X^{n-1}) \numberthis \label{eqr32}\\
    & \stackrel{(f)}\leq I(AS ; S'_{\mathcal{B}}  )\\
    & = 0,
\end{align*}
where
\begin{enumerate}[(a)]
    \item holds because $M'_{n,r_n}$ is a function of $(S'_{\mathcal{B}}, M_{1:n,1:r_n},Y_{\mathcal{B}}^n)$;
    \item holds because $M_{n,r_n}$ is a function of $(A,S, M'_{1:n,1:r_n-1})$;
    \item  holds by repeating $r_n-1$ times the steps between \eqref{eqr12} and \eqref{eqr22};
    \item  holds because $(Y_{\mathcal{B}})_n - ( \bar{M}_{1:n-1},X^n, Y_{\mathcal{B}}^{n-1}, S'_{\mathcal{B}} )-(A,S)$;
    \item  holds because $X_n$ is a function of $(A,S, M'_{1:n-1,1:r_{n-1}})$;
    \item  holds by repeating $n-1$ times the steps between \eqref{eqr12} and~\eqref{eqr32}.
    \end{enumerate}
\end{proof}

Using Lemma \ref{lemmarkov2}, similar to Lemma \ref{lem}, one can obtain the following lemma.

\begin{lem} \label{lem8}
 Fix $b\in \mathcal{B}$. There exists $\hat{A}(V_{b},  X^n)$ such that $$ \lim_{n\to\infty}    \mathbb{P}[ \hat{A}(V_{b},  X^n) \neq {A}]  = 0.$$
\end{lem}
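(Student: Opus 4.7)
The plan is to mirror the proof of Lemma~\ref{lem} almost verbatim, substituting the single-bidder/multi-verifier quantities for the non-colluding multi-bidder quantities and invoking Lemma~\ref{lemmarkov2} instead of Lemma~\ref{lemmarkov}. Fix $b\in\mathcal{B}$ and assume that the bidder behaves honestly during the commit phase. Let $\delta,\gamma>0$ and define the acceptance probability
\begin{equation*}
f(a,s) \triangleq \mathbb{E}_{Y_b^n M S'_b|A=a,S=s}[\mathds{1}\{\beta(V_b,a,s)=1\}],
\end{equation*}
the set of ``useful'' randomness values $\mathcal{G}(a)\triangleq\{s:f(a,s)>1-\gamma\}$, and the conditional acceptance probability
\begin{equation*}
F(x^n,m|a,s)\triangleq \mathbb{E}_{Y_b^n S'_b|M=m,X^n=x^n,S=s}[\mathds{1}\{\beta(Y_b^n,m,S'_b,a,s)=1\}].
\end{equation*}
Then set $F(x^n,m|a)\triangleq \max_{s\in\mathcal{G}(a)}F(x^n,m|a,s)$, $\bar{F}(x^n,m|a)\triangleq \max_{a^*\neq a}F(x^n,m|a^*)$, and define the estimator $\hat{a}(x^n,m)\in\argmax_a F(x^n,m|a)$. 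Since $M$ is part of $V_b$, it suffices to show that $\mathbb{P}[\hat{a}(X^n,M)\neq A]\to 0$.

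The key observation, and the only place where the multi-verifier setting enters, is that Lemma~\ref{lemmarkov2} provides $(A,S)-(M,X^n)-(Y_b^n,S'_b)$, hence $A-(M,X^n,S)-(Y_b^n,S'_b)$ is also a Markov chain. This is exactly the property used in step (ii) of equation~\eqref{eq3} in the proof of Lemma~\ref{lem}: it lets us replace the conditioning on $A=a$ by conditioning on $A=a^\dagger$ (or drop it entirely) inside $F(x^n,m|a^\dagger,s^\dagger)$, so that $F$ can be rewritten purely in terms of the joint distribution of $(Y_b^n,S'_b,M)$ given $(S,A)$. This is the single step where multi-user structure matters, and everything else is a rote translation.

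With this in hand, I would follow the exact chain of inequalities in the proof of Lemma~\ref{lem}. First, the pointwise bound $\mathds{1}\{\hat{a}(x^n,m)\neq a\}\leq \bar{F}(x^n,m|a)+1-F(x^n,m|a)$ gives
\begin{equation*}
\mathbb{P}[\hat{a}(X^n,M)\neq a]\leq \mathbb{E}[\bar{F}(X^n,M|a)]+\mathbb{E}[1-F(X^n,M|a)].
\end{equation*}
Next, Markov's inequality applied to $1-f(a,S)$ together with the correctness condition $\max_{b\in\mathcal{B}}\mathbb{P}[\beta(V_b,a,S)=0]\to 0$ (so that $\mathbb{E}[1-f(a,S)]\leq\delta$ for $n$ large) yields $\mathbb{P}[S\in\mathcal{G}(a)]\geq 1-\delta\gamma^{-1}$, from which one obtains $\mathbb{E}[F(X^n,M|a)]\geq (1-\delta\gamma^{-1})(1-\gamma)$ exactly as in~\eqref{eqf1}. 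Finally, the bindingness requirement $\max_{b\in\mathcal{B}}\mathbb{P}[\beta(V_b,a,S)=1=\beta(V_b,\tilde{a},\tilde{S})]\to 0$ from Definition~\ref{def3} yields $\mathbb{E}[\bar{F}(X^n,M|a)]\leq\delta$ as in~\eqref{eqf2}. Letting $\delta\to 0$ and choosing $\gamma=\sqrt{\delta}$ gives the conclusion.

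I do not anticipate any serious obstacle: the Markov chain from Lemma~\ref{lemmarkov2} does all the heavy lifting, and the rest is bookkeeping. The only mild subtlety is to confirm that the bindingness test $\beta$ in Definition~\ref{def3} depends on $V_{b^\star}$ alone (not on the choice of dropped verifiers), which is already built into the definition and means that $F$, $\bar{F}$, and the bindingness bound are all well-defined for each fixed $b$ independently.
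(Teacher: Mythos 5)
Your proposal is correct and matches the paper's intended argument: the paper itself only sketches this proof by asserting that Lemma~\ref{lem8} follows from Lemma~\ref{lemmarkov2} "similar to Lemma~\ref{lem}," and your proposal fills in exactly that translation, correctly identifying the Markov chain $A-(M,X^n,S)-(Y_b^n,S'_b)$ as the one substantive ingredient and the rest as bookkeeping. No gaps.
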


Finally, using Lemma \ref{lem8}, similar to \eqref{eqconvrate},  for any $b\in \mathcal{B}$, one can show that $n H(X |Y_{b}) \geq    nR  - o(n)$, 
which implies
\begin{align*}
 n \min_{b\in \mathcal{B}} H(X |Y_{b}) \geq    nR  - o(n).
\end{align*}
\section{Concluding Remarks}\label{seccl}
We have investigated multi-user commitment in two distinct settings. In the first setting, a verifier interacts with $L$ bidders, each committing to individual messages. This setting explores whether a multi-bidder protocol can outperform single-bidder protocols and time-sharing. Our results answer this question positively, and establish the sum-rate capacity when the bidders are colluding and the capacity region when they are non-colluding. In the second setting, a single bidder can interact with multiple verifiers. Our results characterize the commitment capacity, showing that positive commitment rates can still be achieved even if some verifiers drop out of the network after the commit phase, a scenario where commitment would otherwise be impossible with only a single verifier.

\bibliographystyle{IEEEtran}
\bibliography{bib}

\end{document}